\newcommand{\ii}{\mathrm{i}}
\newcommand{\ee}{\mathrm{e}}
\newcommand{\T}{\mathrm{T}}
\newtheorem{rhp}{Riemann-Hilbert Problem}
\newtheorem{theorem}{Theorem}
\newtheorem{lemma}{Lemma}
\newtheorem{prop}{Proposition}
\newtheorem{corollary}{Corollary}
\newtheorem{remark}{Remark}
\titleformat{\section}{\centering\LARGE\bfseries}{\thesection}{1em}{}
\titleformat{\subsection}{\Large\bfseries}{\thesubsection}{1em}{}
\begin{document}

\title{Rational solutions of Painlev\'{e}-II equation as Gram determinant}

\author{Liming Ling}
\address{School of Mathematics, South China University of Technology, Guangzhou, China 510641}
\email{linglm@scut.edu.cn}
\author{Bing-Ying Lu}
\address{Department of Mathematics, University of Bremen, Bremen 28359, Germany}
\email{lubi@uni-bremen.de}
\author{Xiaoen Zhang}
\address{School of Mathematics, South China University of Technology, Guangzhou, China 510641}
\email{zhangxiaoen@scut.edu.cn}

\begin{abstract}
Under the Flaschka-Newell Lax pair, the Darboux transformation for the Painlev\'{e}-II equation is constructed by the limiting technique. With the aid of the Darboux transformation, the rational solutions are represented by the Gram determinant, and then we give the large $y$ asymptotics of the determinant and the rational solutions. Finally, the solution of the corresponding Riemann-Hilbert problem is obtained from the Darboux matrices.

{\bf Keywords:} Painlev\'{e}-II equation, Darboux transformation, Rational solutions

{\bf 2020 MSC:} 33E17, 34M50, 37J65, 37K10.
\end{abstract}

\date{\today}

\maketitle
\section{Introduction}

The six Painlev\'e equations ($P_{I}-P_{IV}$) are a class of nonlinear ODE which have long been studied. It originated from when the Painlev\'e school tried to answer Picard's question in \cite{Picard-Jde-1889} in the late 19th century. Because the general solutions cannot be reduced to elementary or known functions, today it is often regarded as a nonlinear type of special function. Due to transcendency, the solutions to the Painlev\'e equations are often called the Painlev\'e transcendents. In the late 1970s, Painlev\'e equations garnered new attention when they are found to be ``integrable'', thus not only extending the tools to analyze their solutions, but also expanding the theory of integrability. These equations also play central role in mathematical physics. For instance, they appear in high-energy physics \cite{Johnson-arxiv-2006}, in planar fluid flow \cite{Clarkson-study-2009}.

We are particularly interested in certain pattern formation related to rogue waves in integrable nonlinear waves and their connection to the Painlev\'e transcendents. Very recently, the locations of the rogue wave patterns in nonlinear Schr\"odinger equation and the lump patterns in Kadomtsev-Petviashvili equation are found to be determined by the root structure of the Yablonskii-Vorob'ev polynomials hierarchy \cite{Dong-phyd-2022,Yang-phyd-2021,Yang-phyd-2021-1}. It has long been observed in \cite{Yablonskii-1959} that the Painlev\'e-II rational solutions can be represented by the Yablonskii-Vorob'ev polynomials, so the root structure is exactly the pole structure for Painlev\'e-II rational solutions. This directly inspired us to add to the ample existing studies of Painlev\'e-II rational solutions from a different perspective.

The general formula of Painlev\'{e}-II equation is given by
\begin{equation}\label{eq:P-II}
p_{yy}-2p^3-\frac{2}{3}py+\frac{2}{3}m=0.
\end{equation}
As mentioned, the general solutions of Painlev\'e equations are transcendental. However, Painlev\'e-II equation is known to have rational solutions for some special parameters. Indeed, in \cite{Yablonskii-1959} Yablonskii first found an iteration formula for the rational solutions with $m\in\mathbb{Z}$ in \eqref{eq:P-II}, and Airault \cite{Airault-Study-1979} showed that the parameter choice is both necessary and sufficient. Like soliton solutions in integrable systems, the rational solutions can be given by B\"acklund transformations. The B\"acklund transformation for Pailev\'e-II equation is given by Lukashevich in $1971$ \cite{Luka-DE-1971}. In $1985$, \cite{Murata-FE-1985} derived the rational solutions of the second and the fourth Painlev\'{e} equation with different integer parameters, and in particular showed the necessary and sufficent conditions for rational solutions to exist. There are numerous study of rational solutions \cite{Airault-Study-1979,Clarkson-JCAM-2003,Clarkson-Study-2020,Clarkson-nonl-2003,Fokas-JMP-1981,Ohta-JMP-1996, Okamoto-Math-1986} through the B\"{a}cklund transformation.

Painlev\'e equations are considered integrable in the sense that they admit Lax representation. Indeed, Painlev\'e-II equation admits the following Flaschka-Newell Lax representation
\begin{equation}\label{eq:Lax-pair}
\begin{aligned}
&\frac{\partial \pmb{v}}{\partial\lambda}=\pmb{A}^{\rm FN}(\lambda; y)\pmb{v}, \quad \pmb{A}^{\rm FN}(\lambda; y):=\begin{bmatrix}-6\ii\lambda^2-3\ii p^2-\ii y&6p\lambda+3\ii p'+m\lambda^{-1}\\6p\lambda-3\ii p'+m\lambda^{-1}&6\ii\lambda^2+3\ii p^2+\ii y
\end{bmatrix},\\
&\frac{\partial \pmb{v}}{\partial y}=\pmb{B}^{\rm FN}(\lambda; y)\pmb{v}, \quad \pmb{B}^{\rm FN}(\lambda; y):=\begin{bmatrix}-\ii\lambda&p\\p&\ii\lambda
\end{bmatrix}.
\end{aligned}
\end{equation}
The compatibility condition
\begin{equation*}
\frac{\partial \pmb{A}^{\rm FN}}{\partial y}-\frac{\partial \pmb{B}^{\rm FN}}{\partial\lambda}+\left[\pmb{A}^{\rm FN},\pmb{B}^{\rm FN}\right]=0
\end{equation*}
yields the Painlev\'{e}-II equation \eqref{eq:P-II}. Due to integrability, one powerful tool for analysis of the Painlev\'e equations is the Riemann-Hilbert problems. The book \cite{Fokas-Pain-book} is a particularly comprehensive and systematic documentation of this modern subject.  The Riemann-Hilbert problem is also used to study the rogue wave patterns that we are interested in. Specifically in literature, \cite{tovbis-CPAM-2013} found that in semiclassical focusing nonlinear Schr\"odinger equation, near a gradient catastrophe point, universal ``rogue wave''-like patterns form can be mapped to the poles of the Painlev\'e-I tritronqu\'ee solution. In \cite{Bilman-Duke-2019}, the authors analyzed the asymptotics of the infinite order rogue waves, and the asymptotics in the transitional region was related to the Tritronqu\'{e}e solutions of Painlev\'{e}-II equation.

There are many ways to derive Painlev\'e-II rational solutions. One standard method is the B\"{a}cklund transformation, given nearly $50$ years ago. However, to the best of our knowledge, few has studied the rational solution of Painlev\'e equations with the Darboux transformation directly. The Darboux transformation for Painlev\'e-II rational solutios is still an open problem up to now.  Observe that the $\partial y$-equation in the Lax pair \eqref{eq:Lax-pair} is similar to the spectral problem in the AKNS system. It is also shown that the Painlev\'{e}-II equation is related the modified Korteweg-de Vries equation via a self-similar transformation \cite{Ablowitz-LNC-1978,Ablowitz-PRL-1977,Flaschka-CMP-1980,Kitaev-Math-1985}. Thus it is natural to borrow the idea from the Darboux transformation of AKNS system to construct the solution of the Painlev\'{e}-II equation.

In the present work, we modify the Darboux transformation to construct the rational Painlev\'e-II solutions. Under the same framework, we also derive the rational solution with a Gram determinant representation. With our determinantal formula, we get the asymptotics of large $y$ by expansion. The rational solutions of Painlev\'e-II equation are often compared to the soliton solutions in integrable nonlinear wave equations, since they can be represented by determinantal formula, and correspond to fully discrete spectral data etc.\mbox{ }\cite{Flaschka-CMP-1980}. However, the Flaschka-Newell determinantal formula is different with our Gram determinant formula. Indeed, in our Darboux transformation, all of the spectral data are at the same point so we have to employ limiting technique. The process is much more like deriving the rogue wave solutions in integrable nonlinear waves.
The Gram determinant representation is also comparable to rogue waves. This also in a way attests for the connection between the roots of Yablonskii-Vorob'ev polynomial (poles of Painlev\'e-II rational solutions) and rogue wave patterns found in \cite{Yang-phyd-2021,Yang-phyd-2021-1}. However, the transformation for Painlev\'e-II has even one more difference, the generalized Darboux-B\"{a}cklund transformation we used is not the so called auto-transformation in comparison, which we believe also makes our work even more interesting.

Ultimately we believe to study the rogue wave pattern, Riemann-Hilbert method can give us a better analytical handle, therefore we always have the Riemann-Hilbert problem in mind. In \cite{Bilman-Duke-2019}, the authors studied the asymptotics of infinite order rogue waves via the Riemann-Hilbert method, which was first constructed by using the Darboux transformation. Based on this idea, we are able to solve the Painlev\'e-II Riemann-Hilbert problem given in \cite{Peter-Sigma-2017}.

The innovation of this paper consists of the following two points: (i) We derive the rational solutions of Painlev\'{e}-II equation via two types of generalized Darboux transformation. In the first case, from the ``seed solution" $p=0, m=0$, we give the iterated Darboux transformation at the same spectrum $\lambda_1=0$ and derive the corresponding B\"{a}cklund transformation. Compared to some other rational solutions in integrable systems, the derivation of Painlev\'e-II rational solution is more difficult. To achieve it, we would need to use special limiting techniques: the Darboux transformation is iterated at the same spectrum and the spectrum $\lambda_1$ must be chosen at a special point. In the second case, we rewrite the generalized Darboux transformation by another equivalent formula and derive the rational solutions as a Gram determinant. In particular, we can analyze the asymptotics of large $y$ by this formula. (ii) Our generalized Darboux transformation can be used to construct the solution of Riemann-Hilbert problem given in \cite{Peter-Sigma-2017}, which provides a new way for solving the corresponding Riemann-Hilbert problem for the Painlev\'{e}-II equation.

The outline of this paper is as follows: In section \ref{sec:DT-1}, we give detailed derivation of rational solutions via the generalized Darboux transformation. Then, we also give the corresponding B\"{a}cklund transformation. In section \ref{sec:DT-2}, the generalized Darboux transformation is rewritten as another equivalent formula, which gives the solution in Gram determinant form, and can be used to analyze the asymptotics for large $y$. In section \ref{sec:RH}, we give a brief introduction about the Riemann-Hilbert problem given by \cite{Peter-Sigma-2017} and solve it with the aid of the Darboux transformation. The final part is the conclusion.

\section{The Darboux transformation of Painelv\'{e}-II equation}
\label{sec:DT-1}
In this section, we would like to construct the generalized Darboux transformation of Painlev\'{e}-II equation and derive the rational solutions. The rational solutions of Painlev\'{e}-II equation play an analogous 
role as the solitons of the integrable partial differential equations. However, to the best of our knowledge, the rational solutions of Painlev\'{e}-II equation have never been constructed by the well known Darboux transformation. From the Lax pair of Eq.\eqref{eq:Lax-pair}, we can see that the $\partial y$-part of Lax pair is the same as the $\partial x$-equation, what we usually call the Lax equation, in the Lax pair for the defocusing NLS equation\cite{Ablowitz-1974-study}, while the $\partial \lambda$-part is different. It has two types of singularity. One is at $\lambda=\infty$ and the other one is at $\lambda=0$. What is more, the $\partial \lambda$-equation contains a constant $m$, which leads to a new difficulty in the Darboux transformation. To overcome these difficulties, we use two kinds of limiting skills following the original idea of the so-called generalized Darboux transformation \cite{ling-PRE-2012,ling-non-2013}. We want to emphasize here that there is a difference between the rational solutions of Painlev\'{e}-II equation and the higher order rogue wave of integrable equations. The former ones are meromorphic functions with simple poles in the complex plane but the latter ones are global solutions in the $(x,t)$-plane.  The detailed derivation is shown as follows.

To construct the rational solutions, we start with the ``seed solution" $p=0, m=0$. Since the $\partial y$-part of Lax pair is the spectral problem of AKNS system, we can set the Darboux transformation matrix as
\begin{equation}\label{DT-2}
\mathbf{T}_{1}(\lambda; y)=\mathbb{I}_2-\frac{\lambda_1-\lambda_1^*}{\lambda-\lambda_1^*}\frac{\phi_{1}\phi_{1}^{\dagger}\sigma_3}{\phi_1^{\dagger}\sigma_3\phi_1},
\end{equation}
where $\phi_{1}$ is a special solution of Lax pair Eq.\eqref{eq:Lax-pair} with $\lambda=\lambda_1\in {\rm i}\mathbb{R}$, that is $\phi_{1}=\ee^{\left(-\ii\lambda_1y-2\ii\lambda_1^3\right)\sigma_3}\mathbf{c}$, and $\mathbf{c}=\begin{bmatrix}c_1, c_2
\end{bmatrix}^{{\rm T}}$ is a constant vector to be determined. We use $\sigma_3$ to denote one of the three Pauli matrices
\begin{equation}
\sigma_1=\begin{bmatrix}0&1\\1&0
\end{bmatrix}, \quad \sigma_2=\begin{bmatrix}0&-\ii\\\ii&0\end{bmatrix},\quad \sigma_3=\begin{bmatrix}1&0\\0&-1
\end{bmatrix}.
\end{equation}
For convenience, denote the entries of $\phi_{j}(j=1,2,\cdots)$ and $\phi_{j}^{[k]}(j=1,2,\cdots, k=1,2,\cdots)$ as follows:
\begin{equation}
\phi_{j}:=\begin{bmatrix}\phi_{j,1}\\
\phi_{j,2}
\end{bmatrix},\quad \phi_{j}^{[k]}:=\begin{bmatrix}\phi_{j,1}^{[k]}\\
\phi_{j,2}^{[k]}
\end{bmatrix}.
\end{equation}

The Darboux transformation is given as follows. Let $\mathbf{v}_1=\mathbf{T}_1(\lambda;y) \mathbf{v}$, then $\mathbf{v}_1$ satisfies a new Lax pair equation:
\begin{equation}\label{eq:Lax-pair-1}
\begin{split}
\frac{\partial \mathbf{v}_{1}}{\partial \lambda}&=\mathbf{A}^{\rm FN}_{1}(\lambda; y)\mathbf{v}_{1}, \quad \mathbf{A}^{\rm FN}_{1}(\lambda; y)=\mathbf{T}_{1, \lambda}(\lambda; y)\mathbf{T}^{-1}_{1}(\lambda; y)+\mathbf{T}_{1}(\lambda; y)\mathbf{A}^{\rm FN}(\lambda; y)\mathbf{T}_{1}^{-1}(\lambda; y),\\
\frac{\partial \mathbf{v}_{1}}{\partial y}&=\mathbf{B}^{\rm FN}_{1}(\lambda; y)\mathbf{v}_{1},\quad  \mathbf{B}^{\rm FN}_{1}(\lambda; y)=\mathbf{T}_{1, y}(\lambda; y)\mathbf{T}^{-1}_{1}(\lambda; y)+\mathbf{T}_{1}(\lambda; y)\mathbf{B}^{\rm FN}(\lambda; y)\mathbf{T}^{-1}_{1}(\lambda; y).
\end{split}
\end{equation}

Equipped with the Darboux transformation defined above, we know that the original Lax pair \eqref{eq:Lax-pair} is converted to a new one by replacing $p, m$ with $p^{[1]}, m_{1}$. In the transformed Lax pair we use the superscript $^{[1]}$ in $p^{[1]}$ {to} mean the first order rational solution, and the subscript in $m_1$ is the constant in correspondence with $p^{[1]}$.

Expanding $\mathbf{B}^{\rm FN}_{1}(\lambda; y)$ at $\lambda\to\infty$, the first order rational solution $p^{[1]}$ is given by
\begin{equation}\label{eq:p1p0}
\begin{split}
p^{[1]}=p+\frac{2\ii\left(\lambda_1-\lambda_1^*\right)\phi_{1,1}\phi_{1,2}^*}{|\phi_{1,1}|^2-|\phi_{1,2}|^2}=p+\frac{2\ii\left(\lambda_1-\lambda_1^*\right)\phi_{1,1}^*\phi_{1,2}}{|\phi_{1,1}|^2-|\phi_{1,2}|^2},
\end{split}
\end{equation}
which imposes one symmetric constraint for $\phi_1$: $\phi_{1,1}\phi_{1,2}^*=\phi_{1,1}^*\phi_{1,2}$.

In \cite{ling-non-2013}, the authors showed that the $\partial y$-equation in the new Lax pair is valid, i.e.\mbox{ }it has the same form except with new $p$ under this transformation. Therefore we only need to check whether the Darboux transformation also satisfies the $\partial \lambda$-part of Lax pair. Plugging the Darboux matrix Eq.\eqref{DT-2} and the original seed solution $p=0, m=0$ into the new Lax pair, we have
\begin{multline}\label{eq:eq-3}
\mathbf{A}^{\rm FN}_{1}(\lambda; y)=\frac{\lambda_1{-}\lambda_1^*}{\left(\lambda{-}\lambda_1^*\right)^2}\frac{\phi_1\phi_1^{\dagger}\sigma_3}{\phi_1^{\dagger}\sigma_3\phi_{1}}\left(\mathbb{I}_{2}{-}\frac{\lambda_1^*{-}\lambda_1}{\lambda{-}\lambda_1}\frac{\phi_1\phi_1^{\dagger}\sigma_3}{\phi_1^{\dagger}\sigma_3\phi_1}\right)
\\{-}\left(\mathbb{I}_2{-}\frac{\lambda_1{-}\lambda_1^*}{\lambda{-}\lambda_1^*}\frac{\phi_{1}\phi_{1}^{\dagger}\sigma_3}{\phi_1^{\dagger}\sigma_3\phi_1}\right)\left(6\ii\lambda^2\sigma_3{+}\ii y\sigma_3+m\lambda^{-1}\sigma_1\right)\left(\mathbb{I}_2{-}\frac{\lambda_1^*{-}\lambda_1}{\lambda{-}\lambda_1}\frac{\phi_{1}\phi_{1}^{\dagger}\sigma_3}{\phi_1^{\dagger}\sigma_3\phi_1}\right).
\end{multline}
Expanding Eq.\eqref{eq:eq-3} in the neighborhood of $\lambda=\infty$ and comparing the coefficient of the off-diagonal part of Eq.\eqref{eq:eq-3}, we know that the coefficient of $\lambda^{-1}$ is
\begin{equation}\label{eq:m1}
m_1:=\frac{4\ii\lambda_1\phi_{1,1}\phi_{1,2}^*}{|\phi_{1,1}|^2-|\phi_{1,2}|^2}y+\frac{24\ii\lambda_1^3\phi_{1,1}\phi_{1,2}^*}{|\phi_{1,1}|^2-|\phi_{1,2}|^2},
\end{equation}
which should be a constant independent of $y$. Therefore, the spectral parameter $\lambda_1$ is chosen to be zero and the entries of $\phi_1$ should satisfy $|\phi_{1,1}|=|\phi_{1,2}|$. To achieve this, we can set the constant vector $\mathbf{c}$ as $\mathbf{c}=\left(1,-1\right)^{\T}$, then $\phi_{1,1}\phi_{1,2}^*=\phi_{1,1}^*\phi_{1,2}$ and $|\phi_{1,1}|=|\phi_{1,2}|$. Note that now both the numerator and the denominator are zero. Expand for infinitesimal $\lambda_1$, we get $m_1=1$, and the first order rational solution $p^{[1]}$ is
\begin{equation}\label{eq:p1}
p^{[1]}=\frac{1}{y}.
\end{equation}
\smallskip

In order to generate the 2nd order rational solution, the second fold Darboux transformation matrix can be set as
\begin{equation}
\mathbf{T}_{2}(\lambda; y)=\mathbb{I}-\frac{\lambda_2-\lambda_2^*}{\lambda-\lambda_2^*}\frac{\phi_{2}^{[1]}\phi_{2}^{[1],\dagger}\sigma_3}{\phi_{2}^{[1],\dagger}\sigma_3\phi_{2}^{[1]}},
\end{equation}
where $\phi_{2}^{[1]}$ is defined as
\begin{equation}\label{eq:phi21}
\phi_{2}^{[1]}=\mathbf{T}_{1}(\lambda_2, y)\phi_{2},
\end{equation}
and $\phi_2=\ee^{\left(-\ii\lambda_2y-2\ii\lambda_2^3\right)\sigma_3}\mathbf{c}$. Recall $\mathbf{c}=(1,-1)^{\T}$, then $\mathbf{v}_2=\mathbf{T}_{2}(\lambda; y)\mathbf{v}_1$ satisfies the new Lax pair
\begin{equation}\label{eq:lax-2}
\begin{split}
\frac{\partial \mathbf{v}_{2}}{\partial \lambda}=\mathbf{A}^{\rm FN}_{2}(\lambda; y)\mathbf{v}_{2}, \quad \mathbf{A}^{\rm FN}_{2}(\lambda; y)=\mathbf{T}_{2, \lambda}(\lambda; y)\mathbf{T}_{2}^{-1}(\lambda; y)+\mathbf{T}_{2}(\lambda; y)\mathbf{A}_{1}^{\rm FN}(\lambda; y)\mathbf{T}_{2}^{-1}(\lambda; y),\\
\frac{\partial \mathbf{v}_{2}}{\partial y}=\mathbf{B}^{\rm FN}_{2}(\lambda; y)\mathbf{v}_{2}, \quad \mathbf{B}^{\rm FN}_{2}(\lambda; y)=\mathbf{T}_{2,y}(\lambda; y)\mathbf{T}_{2}^{-1}(\lambda; y)+\mathbf{T}_{2}(\lambda; y)\mathbf{B}^{\rm FN}_{1}(\lambda; y)\mathbf{T}_{2}^{-1}(\lambda; y),
\end{split}
\end{equation}
with new rational solution $p^{[2]}$ and new constant $m_2$. The off-diagonal term of $\mathbf{B}_2^{\mathrm{FN}}$ is exactly $p^{[2]}$, therefore the second order rational solution $p^{[2]}$ is given by
\begin{equation}
p^{[2]}=p^{[1]}+\frac{2\ii\left(\lambda_2-\lambda_2^*\right)\phi_{2,1}^{[1]}\phi_{2,2}^{[1],*}}{\left|\phi_{2,1}^{[1]}\right|^2-\left|\phi_{2,2}^{[1]}\right|^2}.
\end{equation}
To find $m_2$, expand $\mathbf{A}^{\rm FN}_{2}(\lambda; y)$ at $\lambda\to\infty$. The coefficient of $\lambda^{-1}$ is
\begin{multline}\label{eq:m-1}
m_2=\frac{4\ii\lambda_2\phi_{2,1}^{[1]}\phi_{2,2}^{[1],*}}{\left|\phi_{2,1}^{[1]}\right|^2{-}\left|\phi_{2,2}^{[1]}\right|^2}\left(3p^{[1],2}{+}y+6\lambda_2^2\right)
{-}6\ii\lambda_2\frac{\left|\phi_{2,1}^{[1]}\right|^2{+}\left|\phi_{2,2}^{[1]}\right|^2}{\left|\phi_{2,1}^{[1]}\right|^2{-}\left|\phi_{2,2}^{[1]}\right|^2}p^{[1]}_{y}\\
{+}12\lambda_2^2\frac{\left|\phi_{2,1}^{[1]}\right|^4{-}2\left(\phi^{[1]}_{2,1}\phi_{2,2}^{[1],*}\right)^2{+}\left|\phi_{2,2}^{[1]}\right|^4}{\left(\left|\phi_{2,1}^{[1]}\right|^2{-}\left|\phi_{2,2}^{[1]}\right|^2\right)^2}p^{[1]}{+}1.
\end{multline}

We first look at the first term $\frac{4\ii\lambda_2\phi_{2,1}^{[1]}\phi_{2,2}^{[1],*}}{\left|\phi_{2,1}^{[1]}\right|^2-\left|\phi_{2,2}^{[1]}\right|^2}\left(3p^{[1],2}+y+6\lambda_2^2\right)$.  Substituting Eq.\eqref{eq:p1} into Eq.\eqref{eq:m-1}, $3p^{[1],2}+y+6\lambda_2^2$ is a function of $y$. If Eq.\eqref{eq:m-1} is a constant, then the spectral parameter $\lambda_2$ must also be zero, i.e. $\lambda_2=\lambda_1=0$. When $\lambda_2=\lambda_1$, the function $\phi_{2}^{[1]}$ is degenerate. Thus the Darboux transformation is no longer valid. Following the idea in \cite{ling-PRE-2012}, we can introduce the generalized Darboux transformation and modify $\phi_{2}^{[1]}$ as $\phi_{1}^{[1]}$ with the following relation,
\begin{equation}
\phi_{1}^{[1]}=\frac{\partial \phi_{2}^{[1]}}{\partial \lambda_2}\Big|_{\lambda_2=\lambda_1}.
\end{equation}
Applying the generalized Darboux transformation, the second order rational solution $p^{[2]}$ can just be given by replacing $\phi_{2}^{[1]}$ with $\phi_{1}^{[1]}$.

\smallskip
In fact, if we want to derive the $k$th order rational solutions for the Painlev\'{e}-II equation, the spectral parameters $\lambda_k$ in the $k$th iteration will always be zero. We can construct the $k$th iteration of the Darboux transformation by continuing the above procedures. Begin by setting the $k$th step Darboux transformation matrix as
\begin{equation}
\mathbf{T}_{k}(\lambda; y)=\mathbb{I}-\frac{\lambda_1-\lambda_1^*}{\lambda-\lambda_1^*}\frac{\phi_{1}^{[k-1]}\phi_{1}^{[k-1],\dagger}\sigma_3}{\phi_{1}^{[k-1]\dagger}\sigma_3\phi_{1}^{[k-1]}},
\end{equation}
where $\phi_{1}^{[k-1]}$ is defined as
\begin{equation}\label{eq:phin1}
\phi_{1}^{[k-1]}=\frac{\partial^{k-1}\left(\mathbf{T}_{k-1}(\lambda_1, \lambda_{k}; y)\cdots\mathbf{T}_{2}(\lambda_1, \lambda_{k}; y)\mathbf{T}_{1}(\lambda_1, \lambda_{k}; y)\phi_{k}\right)}{\partial (\lambda_{k})^{k-1}}\Big|_{\lambda_{k}=\lambda_1},
\end{equation}
with $\phi_k=\ee^{\left(-\ii\lambda_ky-2\ii\lambda_{k}^3\right)\sigma_3}\mathbf{c}$. Then the $k$th order rational solution $p^{[k]}$ is given by
\begin{equation}\label{eq:bt-2}
p^{[k]}=p^{[k-1]}+\frac{2\ii\left(\lambda_1-\lambda_1^*\right)\phi_{1,1}^{[k-1]}\phi_{1,2}^{[k-1],*}}{\left|\phi_{1,1}^{[k-1]}\right|^2-\left|\phi_{1,2}^{[k-1]}\right|^2}.
\end{equation}
Correspondingly, the $k$th transformed Lax pair becomes
\begin{equation}
\begin{split}
\frac{\partial \mathbf{v}_{k}}{\partial \lambda}=\mathbf{A}^{\rm FN}_{k}(\lambda; y)\mathbf{v}_{k}, \quad \mathbf{A}^{\rm FN}_{k}(\lambda; y)=\mathbf{T}_{k, \lambda}(\lambda; y)\mathbf{T}_{k}^{-1}(\lambda; y)+\mathbf{T}_{k}(\lambda; y)\mathbf{A}_{k-1}^{\rm FN}(\lambda; y)\mathbf{T}_{k}^{-1}(\lambda; y),\\
\frac{\partial \mathbf{v}_{k}}{\partial y}=\mathbf{B}^{\rm FN}_{k}(\lambda; y)\mathbf{v}_{k}, \quad \mathbf{B}^{\rm FN}_{k}(\lambda; y)=\mathbf{T}_{k,y}(\lambda; y)\mathbf{T}_{k}^{-1}(\lambda; y)+\mathbf{T}_{k}(\lambda; y)\mathbf{B}^{\rm FN}_{k-1}(\lambda; y)\mathbf{T}_{k}^{-1}(\lambda; y).
\end{split}
\end{equation}
Similar as above, the coefficient of $\lambda^{-1}$ in the off-diagonal elements in $\mathbf{A}_{k}^{\rm FN}(\lambda; y)$ is
\begin{multline}\label{eq:eq-6}
m_k=2\ii\left(\lambda_1-\lambda_1^*\right)\frac{\phi_{1,1}^{[k-1]}\phi_{1,2}^{[k-1],*}}{\left|\phi_{1,1}^{[k-1]}\right|^2-\left|\phi_{1,2}^{[k-1]}\right|^2}\left[3\left(p^{[k-1]}\right)^2+y\right]
-3\ii\left(\lambda_1-\lambda_1^*\right) \frac{\left|\phi_{1,1}^{[k-1]}\right|^2+\left|\phi_{1,2}^{[k-1]}\right|^2}{\left|\phi_{1,1}^{[k-1]}\right|^2-\left|\phi_{1,2}^{[k-1]}\right|^2}p^{[k-1]}_{y}+m_{k-1}.
\end{multline}
Substituting Eq.\eqref{eq:bt-2} into Eq.\eqref{eq:eq-6}, we have
\begin{equation}\label{eq:eq-7}
m_k=-3\ii\left(\lambda_1-\lambda_1^*\right) \frac{\left|\phi_{1,1}^{[k-1]}\right|^2+\left|\phi_{1,2}^{[k-1]}\right|^2}{\left|\phi_{1,1}^{[k-1]}\right|^2-\left|\phi_{1,2}^{[k-1]}\right|^2}p^{[k-1]}_{y}+
\left(p^{[k]}-p^{[k-1]}\right)\left[3\left(p^{[k-1]}\right)^2+y\right]+m_{k-1}.
\end{equation}
A transformation from $p^{[k-1]}$ to $p^{[k]}$ is usually called a B\"acklund tranfsformation, therefore Eq.\eqref{eq:eq-7} can be regarded as an incomplete B\"{a}cklund transformation, with the coefficient of $p_y^{[k-1]}$ to be determined. If in Eq.\eqref{eq:eq-7}, the coefficient of $p_y^{[k-1]}$  can also be written as a relation between $p^{[k]}$ and $p^{[k-1]}$, then Eq.\eqref{eq:eq-7} will complete our transformation. Next, we give a lemma about the symmetry of $\phi_{1}^{[k-1]}$, which will help us complete the B\"{a}cklund transformation.
\begin{lemma}\label{prop:sym}
Let $\phi_k=e^{\left(-\ii\lambda_ky-2\ii\lambda_k^3 \right)\sigma_3}\mathbf{c}$, and the constant vector $\mathbf{c}=\left(1,-1\right)^{\T}$. Define the vector $\phi_{k}^{[k-1]}$ as
\begin{equation}\label{eq:phikk-1}
\phi_{k}^{[k-1]}=\mathbf{T}_{k-1}(\lambda_1, \lambda_{k}; y)\cdots\mathbf{T}_{1}(\lambda_1, \lambda_{k}; y)\mathbf{T}_{0}(\lambda_1, \lambda_{k}; y)\phi_{k},\,\, (k=1,2,\cdots), \quad \mathbf{T}_{0}(\lambda_1, \lambda_{k}; y)=\mathbb{I}_2.
\end{equation}
Then for $\lambda_i$, $\lambda_k\in\mathrm{i}\mathbb{R}$, $\phi_k^{[k-1]}$ satisfies the following symmetry relation
\begin{equation}\label{eq:symme-phim-1}
\begin{split}
&\phi_{k,1}^{[k-1]}(\lambda_1,\lambda_k; y)=\left[\phi_{k,1}^{[k-1]}(\lambda_1, \lambda_k; y)\right]^*=-\left[\phi_{k,2}^{[k-1]}(\lambda_1^*, \lambda_k^*; y)\right]^*, \\
&\phi_{k,2}^{[k-1]}(\lambda_1, \lambda_k; y)=\left[\phi_{k,2}^{[k-1]}(\lambda_1, \lambda_k); y\right]^*.
\end{split}
\end{equation}
In comparison, the vector $\phi_{1}^{[k-1]}$ in generalized Darboux transformation defined in Eq.\eqref{eq:phin1} satisfies a different kind of symmetry relation,
\begin{equation}\label{eq:symme-phim-2}
\phi^{[k-1]}_{1,1}(\lambda_1;y)=(-1)^{k-1}\phi^{[k-1],*}_{1,1}(\lambda_1;y)=(-1)^{k}\phi^{[k-1]}_{1,2}(\lambda_1^*;y)=-\phi^{[k-1],*}_{1,2}(\lambda_1^*;y),
\end{equation}
with $\lambda_1\in\mathrm{i}\mathbb{R}$.
\end{lemma}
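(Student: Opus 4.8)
\emph{Strategy.} The plan is to prove \eqref{eq:symme-phim-1} and \eqref{eq:symme-phim-2} together by strong induction on $k$, building everything out of two elementary symmetries and then propagating them through the matrix product in \eqref{eq:phikk-1} and the $\lambda_k$-differentiation in \eqref{eq:phin1}. The first ingredient is the symmetry of the bare seed: because $\mathbf{c}=(1,-1)^{\T}$, a one-line computation of the exponent in $\phi_k=\ee^{(-\ii\lambda_k y-2\ii\lambda_k^3)\sigma_3}\mathbf{c}$ gives $\overline{\phi_k(\lambda_k)}=-\sigma_1\phi_k(\lambda_k^*)$, and in particular $\phi_k$ is a real vector when $\lambda_k\in\ii\mathbb{R}$. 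The second ingredient is a flip symmetry for a single Darboux factor. Writing $P=\phi\phi^{\dagger}\sigma_3/(\phi^{\dagger}\sigma_3\phi)$ for the dressing projector attached to a generating vector $\phi$, the identities $\sigma_1\sigma_3\sigma_1=-\sigma_3$ and $\overline{\phi^{\dagger}\sigma_3\phi}=\overline{\phi}^{\dagger}\sigma_3\overline{\phi}$ show that, as soon as $\overline{\phi(\lambda_1)}=\epsilon\,\sigma_1\phi(\lambda_1^*)$ for some scalar $\epsilon$ (the scalar is irrelevant since $P$ is unchanged under $\phi\mapsto c\phi$), one has $\sigma_1\overline{P(\lambda_1)}\sigma_1=P(\lambda_1^*)$ and hence $\mathbf{T}(\lambda_1^*,\lambda^*;y)=\sigma_1\overline{\mathbf{T}(\lambda_1,\lambda;y)}\sigma_1$ for the corresponding factor of the form \eqref{DT-2}.

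\emph{Proof of \eqref{eq:symme-phim-1}.} Granting, by the inductive hypothesis, that the generating vectors $\phi_1^{[j-1]}$ of $\mathbf{T}_1,\dots,\mathbf{T}_{k-1}$ satisfy the flip symmetry $\overline{\phi_1^{[j-1]}(\lambda_1)}=\epsilon_j\sigma_1\phi_1^{[j-1]}(\lambda_1^*)$ — which is exactly the vector form of \eqref{eq:symme-phim-2} at the lower levels — every factor obeys $\mathbf{T}_j(\lambda_1^*,\lambda_k^*;y)=\sigma_1\overline{\mathbf{T}_j(\lambda_1,\lambda_k;y)}\sigma_1$. Inserting this into \eqref{eq:phikk-1}, the inner $\sigma_1$'s telescope through the product and the surviving factor $\sigma_1\phi_k(\lambda_k^*)=-\overline{\phi_k(\lambda_k)}$ is absorbed by the seed symmetry; this gives the vector identity $\phi_k^{[k-1]}(\lambda_1^*,\lambda_k^*;y)=-\sigma_1\overline{\phi_k^{[k-1]}(\lambda_1,\lambda_k;y)}$, whose two entries are precisely the cross relation of \eqref{eq:symme-phim-1}. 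The two reality statements are obtained separately: for $\lambda_1,\lambda_k\in\ii\mathbb{R}$ the scalar prefactor $(\lambda_1-\lambda_1^*)/(\lambda_k-\lambda_1^*)$ is a quotient of imaginary numbers, hence real, while each $\phi_1^{[j-1]}(\lambda_1)$ equals a real vector times the phase $(-\ii)^{j-1}$ produced by the $j-1$ derivatives in $\lambda_k=\ii t$ (this uses the level-$j$ reality, part of the hypothesis); consequently every projector $P_j$ and every $\mathbf{T}_j$ is real, and the product of real matrices against the real seed $\phi_k$ is real.

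\emph{Proof of \eqref{eq:symme-phim-2}.} Both entries of $\phi_k^{[k-1]}(\lambda_1,\lambda_k;y)$ are holomorphic in $\lambda_k$, so the cross relation of \eqref{eq:symme-phim-1} reads $\phi_{k,1}^{[k-1]}(\lambda_1,\lambda_k)=-\overline{g(\overline{\lambda_k})}$ with $g(\lambda_k)=\phi_{k,2}^{[k-1]}(\lambda_1^*,\lambda_k)$ holomorphic; applying $\partial_{\lambda_k}^{k-1}$, commuting it with the Schwarz reflection, and setting $\lambda_k=\lambda_1$ reproduces $\phi_1^{[k-1]}$ through \eqref{eq:phin1} and yields the flip equality $\phi_{1,1}^{[k-1]}(\lambda_1)=-\phi_{1,2}^{[k-1],*}(\lambda_1^*)$. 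The self-conjugation equality with its factor $(-1)^{k-1}$ comes from the reality on $\ii\mathbb{R}$: a function $h$ real on the imaginary axis obeys $h(z)=\overline{h(-\bar z)}$, each differentiation of the reflected side flips a sign, and evaluating at $z=\lambda_1$ (where $-\overline{\lambda_1}=\lambda_1$) gives $\phi_{1,1}^{[k-1]}(\lambda_1)=(-1)^{k-1}\phi_{1,1}^{[k-1],*}(\lambda_1)$; combining the two produces the remaining equality with $(-1)^k\phi_{1,2}^{[k-1]}(\lambda_1^*)$. The base case $k=1$ is the seed symmetry checked above, so the induction closes.

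\emph{Main obstacle.} The delicate point is this last step: one must carefully justify that conjugation and reflection commute with $\partial_{\lambda_k}^{k-1}$, track which slot ($\lambda_1$ or $\lambda_k$) each reflection acts in, and collect correctly the $(-1)^{k-1}$ (equivalently $(-\ii)^{k-1}$) signs that the repeated differentiation injects. These signs are exactly what separates the $k$-dependent symmetry of the generalized vector $\phi_1^{[k-1]}$ in \eqref{eq:symme-phim-2} from the cleaner, sign-free symmetry of the non-degenerate vector $\phi_k^{[k-1]}$ in \eqref{eq:symme-phim-1}, and mishandling them would break the induction at precisely the place where the two relations feed each other.
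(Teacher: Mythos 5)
Your proposal is correct and follows essentially the same route as the paper: induction on the number of Darboux steps, propagation of the conjugation/flip symmetry through each factor $\mathbf{T}_j$ (your telescoping identity $\sigma_1\overline{\mathbf{T}_j(\lambda_1,\lambda;y)}\sigma_1=\mathbf{T}_j(\lambda_1^*,\lambda^*;y)$ is just an explicit form of what the paper asserts when it writes $\phi_k^{[k-1]}=\widehat{\phi}_k^{[k-2]}-(\text{scalar})\,\phi_1^{[k-2]}$ and declares the symmetry ``clear''), and then differentiation in $\lambda_k$ to transfer the symmetry to the degenerate vector $\phi_1^{[k-1]}$. Your Schwarz-reflection bookkeeping of the $(-1)^{k-1}$ factors actually supplies the step the paper compresses into ``which indicates,'' so the proposal is, if anything, more complete than the printed proof.
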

\begin{proof}
Using induction. When $k=1, \lambda_1=0$, $\phi_{1}=\left(\ee^{-\ii\lambda_1y-2\ii\lambda_1^3},-\ee^{\ii\lambda_1y+2\ii\lambda_1^3}\right)^{\T}$ immediately satisfies the symmetry relation Eq.\eqref{eq:symme-phim-1} and Eq.\eqref{eq:symme-phim-2}. Supposing that these symmetries are valid under the $k-1$th step Darboux transformation, we only need to check that in the $k$th step, these symmetries are still valid. To prove it, we rewrite Eq.\eqref{eq:phikk-1} as
\begin{equation}
\phi_{k}^{[k-1]}=\mathbf{T}_{k-1}(\lambda_1, \lambda_k; y)\widehat{\phi}_{k}^{[k-2]},\quad \widehat{\phi}_{k}^{[k-2]}:=\mathbf{T}_{k-2}(\lambda_1, \lambda_{k}; y)\cdots\mathbf{T}_{1}(\lambda_1, \lambda_{k}; y)\phi_{k},
\end{equation}
where
$\mathbf{T}_{k-1}(\lambda_1, \lambda_k;y)=\mathbb{I}_2-\frac{\lambda_{1}-\lambda_{1}^*}{\lambda_k-\lambda_{1}^*}\frac{\phi_{1}^{[k-2]}\phi_{1}^{[k-2],\dagger}\sigma_3}{\phi_{1}^{[k-2],\dagger}\sigma_3\phi_{1}^{[k-2]}}$, $\phi_{1}^{[k-2]}=\frac{\partial^{k-2}\left(\mathbf{T}_{k-2}(\lambda_1, \lambda_{k-1};y)\cdots\mathbf{T}_{1}(\lambda_1, \lambda_{k-1};y)\phi_{k-1}\right)}{\partial (\lambda_{k-1})^{k-2}}\Big|_{\lambda_{k-1}=\lambda_1}.$ By a simple calculation, $\phi_{k}^{[k-1]}$ is equal to
\begin{equation}
\phi_{k}^{[k-1]}=\widehat{\phi}_{k}^{[k-2]}-\left(\frac{\lambda_{1}-\lambda_{1}^*}{\lambda_k-\lambda_{1}^*}\frac{\phi_{1}^{[k-2],\dagger}\sigma_3\widehat{\phi}_{k}^{[k-2]}}{\phi_{1}^{[k-2],\dagger}\sigma_3\phi_{1}^{[k-2]}}\right)\phi_{1}^{[k-2]}.
\end{equation}
It is clear that if $\widehat{\phi}_{k}^{[k-2]}$ and $\phi_{1}^{[k-2]}$ satisfy the symmetry relation Eq.\eqref{eq:symme-phim-1} and Eq.\eqref{eq:symme-phim-2} respectively under this assumption, then $\phi_{k}^{[k-1]}$ will also satisfy the symmetry Eq.\eqref{eq:symme-phim-1} when $\lambda_{k}\in\ii\mathbb{R}$.

Next, we give the proof of Eq.\eqref{eq:symme-phim-2}. From the definition of $\phi_{1}^{[k-1]}$ in Eq.\eqref{eq:phin1}, we know that the entries of $\phi_{1}^{[k-1]}$ can be written as
\begin{equation}\label{eq:symme-phim}
\begin{split}
\phi_{1,1}^{[k-1]}(\lambda_1;y)=\frac{\partial^{k-1}\phi_{k,1}^{[k-1]}(\lambda_1, \lambda_k; y)}{\partial (\lambda_{k})^{k-1}}\Big|_{\lambda_{k}=\lambda_1}, \quad \phi_{1,1}^{[k-1],*}(\lambda_1;y)=\frac{\partial^{k-1}\phi_{k,1}^{[k-1],*}(\lambda_1, \lambda_k; y)}{\partial (\lambda_{k}^*)^{k-1}}\Big|_{\lambda_{k}^*=\lambda_1^*},\\
\phi_{1,2}^{[k-1]}(\lambda_1^*;y)=\frac{\partial^{k-1}\phi_{k,2}^{[k-1]}(\lambda_1^*, \lambda_k^*; y)}{\partial (\lambda_{k})^{k-1}}\Big|_{\lambda_{k}=\lambda_1},\quad \phi_{1,2}^{[k-1],*}(\lambda_1^*;y)=\frac{\partial^{k-1}\phi_{k,2}^{[k-1],*}(\lambda_1^*,\lambda_k^*;y)}{\partial (\lambda_{k}^*)^{k-1}}\Big|_{\lambda_{k}^*=\lambda_1^*},
\end{split}
\end{equation}
which indicates $\phi_{1,1}^{[k-1]}(\lambda_1;y)=(-1)^{k-1}\phi_{1,1}^{[k-1],*}(\lambda_1;y)=(-1)^{k}\phi_{1,2}^{[k-1]}(\lambda_1^*;y)=-\phi_{1,2}^{[k-1],*}(\lambda_1^*;y)$, it completes the proof.
\end{proof}
Based on the symmetry relation in lemma \ref{prop:sym}, when $\lambda_1=0$, we have
\begin{equation}\label{eq:symme-phim-3}
\phi_{1,1}^{[k-1]}(0;y)=(-1)^{k-1}\phi_{1,1}^{[k-1],*}(0;y)=(-1)^{k}\phi_{1,2}^{[k-1]}(0;y)=-\phi_{1,2}^{[k-1],*}(0;y),
\end{equation}
substituting Eq.\eqref{eq:symme-phim-3} into Eq.\eqref{eq:eq-7}, we can get an explicit relation between $p^{[k]}$ and $p^{[k-1]}$,
\begin{equation}\label{eq:bt-1}
\left(p^{[k]}-p^{[k-1]}\right)\left((-1)^{k-1}3p^{[k-1]}_{y}+3\left(p^{[k-1]}\right)^2+y\right)+m_{k-1}=m_{k}.
\end{equation}
Then the B\"{a}cklund transformation between $p^{[k]}$ and $p^{[k-1]}$ is
\begin{equation}\label{eq:BT-P-II-1}
p^{[k]}=p^{[k-1]}+\frac{m_{k}-m_{k-1}}{(-1)^{k-1}3p_{y}^{[k-1]}+3\left(p^{[k-1]}\right)^2+y}.
\end{equation}

Summarizing the above calculation, we obtain rational solutions of all orders. In \cite{Airault-Study-1979}, Airault pointed out that $m\in\mathbb{Z}$ is the necessary and sufficient condition for Painlev\'e-II solution to be rational. By uniqueness, we conclude all general rational solutions are given by our generalized Darboux transform in Theorem \ref{theo-1}.
\begin{theorem}\label{theo-1}
Setting $\lambda_1=-\lambda_1^*=0$, the generalized Darboux transformation matrices for the Painlev\'{e}-II equation \eqref{eq:P-II} are given by
\begin{equation}\label{eq:DT}
\mathbf{T}^{[k]}(\lambda; y)=\mathbf{T}_{k}(\lambda; y)\cdots\mathbf{T}_{2}(\lambda; y)\mathbf{T}_{1}(\lambda; y),
\end{equation}
where
\begin{equation}
\mathbf{T}_{k}(\lambda; y)=\mathbb{I}-\frac{\lambda_1-\lambda_1^*}{\lambda-\lambda_1^*}\frac{\phi_{1}^{[k-1]}\phi_{1}^{[k-1],\dagger}\sigma_3}{\phi_{1}^{[k-1],\dagger}\sigma_3\phi_{1}^{[k-1]}},
\end{equation}
and $\phi_{1}^{[k-1]}$ is defined in Eq.\eqref{eq:phin1}. By expanding $\mathbf{T}^{[k]}(\lambda; y)$ in neighborhood of $\lambda=\infty$, the rational solutions of Painlev\'{e}-II Eq.\eqref{eq:P-II} can be given as
\begin{equation}\label{eq:pm}
p^{[k]}(y)=2\ii\lim\limits_{\lambda\to\infty}\lambda \mathbf{T}^{[k]}(\lambda; y)_{12}=p+\sum_{l=0}^{k-1}\frac{2\ii\left(\lambda_1-\lambda_1^*\right)\phi_{1,1}^{[l]}\phi_{1,2}^{[l],*}}{\left|\phi_{1,1}^{[l]}\right|^2-\left|\phi_{1,2}^{[l]}\right|^2},
\end{equation}
where $p$ is the seed solution $p=0$ in \eqref{DT-2}.
\end{theorem}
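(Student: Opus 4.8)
The plan is to prove the two equalities in \eqref{eq:pm} separately. The first equality, the extraction formula $p^{[k]}=2\ii\lim_{\lambda\to\infty}\lambda\,\mathbf{T}^{[k]}(\lambda;y)_{12}$, I would obtain from Darboux covariance of the $\partial_y$-equation. The composite matrix $\mathbf{T}^{[k]}=\mathbf{T}_k\cdots\mathbf{T}_1$ is a product of gauge transformations, so it intertwines the seed operator with the $k$-times transformed one,
\begin{equation*}
\mathbf{B}^{\rm FN}_k=\mathbf{T}^{[k]}_y\bigl(\mathbf{T}^{[k]}\bigr)^{-1}+\mathbf{T}^{[k]}\mathbf{B}^{\rm FN}\bigl(\mathbf{T}^{[k]}\bigr)^{-1},
\end{equation*}
where for the seed $p=0$ we have $\mathbf{B}^{\rm FN}=-\ii\lambda\sigma_3$ and $\mathbf{B}^{\rm FN}_k=-\ii\lambda\sigma_3+p^{[k]}\sigma_1$. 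Writing $\mathbf{T}^{[k]}=\mathbb{I}_2+\lambda^{-1}\mathbf{T}^{(1)}+O(\lambda^{-2})$ at $\lambda=\infty$ and matching the $\lambda^0$ coefficient, the $y$-derivative term is $O(\lambda^{-1})$ and drops out, while the conjugation contributes $\ii[\sigma_3,\mathbf{T}^{(1)}]$. Hence $p^{[k]}\sigma_1=\ii[\sigma_3,\mathbf{T}^{(1)}]$; reading off the $(1,2)$ entry gives $p^{[k]}=2\ii\,\mathbf{T}^{(1)}_{12}=2\ii\lim_{\lambda\to\infty}\lambda\,\mathbf{T}^{[k]}(\lambda;y)_{12}$, and the symmetry of Lemma \ref{prop:sym} ensures the $(2,1)$ entry yields the same real value, so $\mathbf{B}^{\rm FN}_k$ genuinely has Flaschka--Newell form.

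For the second equality I would use additivity of the simple-pole residues of the factors. Each factor expands as $\mathbf{T}_j=\mathbb{I}_2-\lambda^{-1}(\lambda_1-\lambda_1^*)P^{(j-1)}+O(\lambda^{-2})$, where $P^{(l)}=\phi_1^{[l]}\phi_1^{[l],\dagger}\sigma_3/\bigl(\phi_1^{[l],\dagger}\sigma_3\phi_1^{[l]}\bigr)$. Multiplying the $k$ factors, every cross term is $O(\lambda^{-2})$, so the $\lambda^{-1}$ coefficient is additive, $\mathbf{T}^{(1)}=-(\lambda_1-\lambda_1^*)\sum_{l=0}^{k-1}P^{(l)}$. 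A direct computation of the $(1,2)$ entry of $P^{(l)}$ gives $-\phi_{1,1}^{[l]}\phi_{1,2}^{[l],*}/\bigl(|\phi_{1,1}^{[l]}|^2-|\phi_{1,2}^{[l]}|^2\bigr)$, and substituting into $p^{[k]}=2\ii\,\mathbf{T}^{(1)}_{12}$ reproduces the claimed sum exactly; this agrees with iterating the B\"acklund relation \eqref{eq:bt-2} from $p^{[0]}=p=0$, which provides an independent check.

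The main obstacle is not this algebra but the interpretation of the formula at $\lambda_1=0$. By the symmetry \eqref{eq:symme-phim-3}, at $\lambda_1=0$ one has $|\phi_{1,1}^{[l]}|=|\phi_{1,2}^{[l]}|$ for every $l$, so each denominator vanishes and the prefactor $\lambda_1-\lambda_1^*$ vanishes as well, making every summand a $0/0$ indeterminate. The formula must therefore be read as the limit $\lambda_1\to0$ along $\ii\mathbb{R}$, precisely the limiting technique that produced $m_1=1$ and $p^{[1]}=1/y$. I would carry out this limit termwise, using the symmetry relations of Lemma \ref{prop:sym} to show that numerator and denominator vanish to the same order so that each ratio tends to a finite rational function of $y$, and to show that the coefficient of $p^{[k-1]}_y$ in \eqref{eq:bt-1} is real, which forces $m_k-m_{k-1}$ to be a $y$-independent integer. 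Finally, invoking Airault's characterization that $m\in\mathbb{Z}$ is necessary and sufficient for rationality gives that \eqref{eq:pm} exhausts all rational Painlev\'e-II solutions, completing the statement.
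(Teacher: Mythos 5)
Your proposal is correct and follows essentially the same route as the paper: the paper gives no separate proof of Theorem \ref{theo-1} (it is a summary of the Section \ref{sec:DT-1} construction), and your extraction of $p^{[k]}$ from the $\lambda^{-1}$ coefficient of $\mathbf{T}^{[k]}$ via covariance of the $\partial_y$-equation, the additivity of the simple-pole residues (equivalent to telescoping Eq.\eqref{eq:bt-2}), the $\lambda_1\to 0$ limit resolving the $0/0$ indeterminacy, and the appeal to Airault's theorem reproduce exactly that construction. One minor quibble: reality of the coefficient of $p^{[k-1]}_y$ does not by itself force $m_k-m_{k-1}$ to be $y$-independent; as in the paper, constancy of $m_k$ comes from demanding that the $\lambda^{-1}$ coefficient of $\mathbf{A}^{\rm FN}_k$ be free of $y$, which is precisely what pins $\lambda_k=0$ in the first place.
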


\bigskip
\noindent We remark on the specificity of Painlev\'e-II rational solutions in the transformation.
\begin{remark}
It is clear that each Painlev\'e-II equation Eq.\eqref{eq:P-II} involves one constant $m$, and that from our transformation in each iteration, we take a different value of $m_k$. Therefore, the B\"{a}cklund transformation between two rational solutions is not the auto-B\"{a}cklund transformation. For each order of the rational solutions, the constants $m_k$ differ. This makes our B\"acklund transformation different than what usually appears in integrable PDEs.
\end{remark}

We derived the B\"acklund transformation \eqref{eq:BT-P-II-1} through generalized Darboux transformation. In fact, apart from this method, we can also give a B\"{a}cklund transformation for $p^{[k]}$ from the Eq.\eqref{eq:P-II} itself. Before discussing it, we first give a proposition about the Schwarzian derivative.
\begin{prop}\label{remark-SD}
The Schwarzian derivative $S(\sigma)=\frac{d}{dy}\left(\frac{\sigma_{yy}}{\sigma_{y}}\right)-\frac{1}{2}\left(\frac{\sigma_{yy}}{\sigma_{y}}\right)^2$\cite{Weiss-JMP-1983} is invariant under the M\"{o}bius group $\sigma=\frac{a\eta+b}{c\eta+d}$, where $ad-bc\neq 0$.
\end{prop}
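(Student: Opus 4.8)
The plan is to reduce the claim to the single fact that the Schwarzian annihilates the three generators of the Möbius group, and then invoke the group structure. First I would record the convenient rewriting $S(\sigma)=\left(\log\sigma_y\right)_{yy}-\tfrac12\left(\left(\log\sigma_y\right)_y\right)^2$, which makes transparent that $S(\sigma)$ depends on $\sigma$ only through $\sigma_y$, and only up to a constant multiplicative factor in $\sigma_y$. Consequently, for any affine change $\sigma=a\eta+b$ with $a\neq 0$ one has $\log\sigma_y=\log a+\log\eta_y$, whose $y$-derivatives coincide with those of $\log\eta_y$; hence $S(\sigma)=S(\eta)$ follows immediately for the entire affine subgroup, with no computation.

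The only genuine computation is the inversion $\sigma=1/\eta$. Here I would compute $\sigma_y=-\eta_y/\eta^2$ and $\sigma_{yy}=-\eta_{yy}/\eta^2+2\eta_y^2/\eta^3$, so that $\sigma_{yy}/\sigma_y=\eta_{yy}/\eta_y-2\eta_y/\eta$. Differentiating once more and subtracting half of its square, the terms proportional to $\eta_{yy}/\eta$ and to $\eta_y^2/\eta^2$ cancel pairwise, leaving $S(\sigma)=\eta_{yyy}/\eta_y-\tfrac32(\eta_{yy}/\eta_y)^2=S(\eta)$. This establishes invariance under inversion.

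To finish, I would use that every Möbius map factors through these generators: if $c=0$ it is affine, while if $c\neq 0$ one writes $\dfrac{a\eta+b}{c\eta+d}=\dfrac{a}{c}+\dfrac{bc-ad}{c}\cdot\dfrac{1}{c\eta+d}$, exhibiting $\sigma$ as (affine)$\circ$(inversion)$\circ$(affine). Since each of the two preceding steps in fact proves the stronger statement $S(F(h))=S(h)$ for an \emph{arbitrary} function $h(y)$, the property is closed under composition, and the three factors compose to give $S(\sigma)=S(\eta)$, which is the asserted invariance.

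I expect no conceptual obstacle, since this is a classical identity; the only point requiring care is organizing the argument so that the single nontrivial inversion calculation—together with the pairwise cancellation bookkeeping—carries the whole result through the generator decomposition, and stating the generator invariances for arbitrary inner functions so that composition is legitimate. It is worth noting two cleaner but less elementary alternatives: one may instead derive the cocycle law $S(F\circ G)=\left(S(F)\circ G\right)(G_y)^2+S(G)$ and check directly that $S$ of a Möbius map vanishes; or one may use the representation $\sigma=w_1/w_2$ with $w_1,w_2$ solving $w_{yy}+\tfrac12 S(\sigma)\,w=0$, under which a Möbius transformation merely changes the basis of solutions of the same linear equation and hence leaves $S$ unchanged.
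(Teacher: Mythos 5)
Your proof is correct, but it follows a genuinely different route from the paper's. The paper works with all four parameters at once: from the relation $c\sigma\eta+d\sigma-a\eta-b=0$ it differentiates three times to obtain a homogeneous linear system in $(c,d,a)$, whose nontrivial solvability forces a $3\times 3$ determinant to vanish; expanding that determinant yields the identity $2\sigma_y\eta_y^2\sigma_{yyy}-3\eta_y^2\sigma_{yy}^2-2\eta_y\sigma_y^2\eta_{yyy}+3\sigma_y^2\eta_{yy}^2=0$, which after division by $2\sigma_y^2\eta_y^2$ is exactly $S(\sigma)=S(\eta)$. That elimination argument is a single uniform computation with no case distinction, and it produces the symmetric polynomial identity directly. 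Your generator decomposition instead isolates the one nontrivial calculation (inversion), disposes of the affine part for free via the rewriting $S(\sigma)=(\log\sigma_y)_{yy}-\tfrac12((\log\sigma_y)_y)^2$, and closes up under composition because each generator identity is proved for an arbitrary inner function $h(y)$ --- a point you rightly emphasize, since without it the composition step would be illegitimate. Your explicit factorization $\frac{a\eta+b}{c\eta+d}=\frac{a}{c}+\frac{bc-ad}{c}\cdot\frac{1}{c\eta+d}$ for $c\neq 0$ is valid, and the condition $ad-bc\neq 0$ guarantees the outer affine map is nondegenerate. Both proofs tacitly assume $\eta_y\neq 0$ (equivalently $\sigma_y\neq 0$) so that the Schwarzian is defined; neither flags this, and it is harmless. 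The two alternatives you mention at the end (the cocycle law for $S(F\circ G)$, and the second-order linear ODE characterization) are also standard and would work, but are further from the paper's elementary computation.
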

\begin{proof}
From the M\"{o}bius transformation $\sigma=\frac{a\eta+b}{c\eta+d}$, we have
\begin{equation}\label{eq:Mobuis-T}
c\sigma\eta+d\sigma-a\eta-b=0.
\end{equation}
Taking the derivative of Eq.\eqref{eq:Mobuis-T} with respect to $y$ for once, twice and thrice respectively, then we have
\begin{equation}
\begin{bmatrix}
(\sigma\eta)_{y}&\sigma_{y}&-\eta_{y}\\
(\sigma\eta)_{yy}&\sigma_{yy}&-\eta_{yy}\\
(\sigma\eta)_{yyy}&\sigma_{yyy}&-\eta_{yyy}
\end{bmatrix}
\begin{bmatrix}c\\d\\a
\end{bmatrix}=0,
\end{equation}
which indicates
\begin{equation}
\left|\begin{matrix}(\sigma\eta)_{y}&\sigma_{y}&-\eta_{y}\\
(\sigma\eta)_{yy}&\sigma_{yy}&-\eta_{yy}\\
(\sigma\eta)_{yyy}&\sigma_{yyy}&-\eta_{yyy}
\end{matrix}\right|=0.
\end{equation}
Through a simple calculation, we can get the relation between $\sigma$ and $\eta$,
\begin{equation}
\begin{split}
&2\sigma_{y}\eta_{y}^2\sigma_{yyy}-3\eta_{y}^2\sigma_{yy}^2-2\eta_{y}\sigma_{y}^2\eta_{yyy}+3\sigma_{y}^2\eta_{yy}^2=0,\\
\Rightarrow&\frac{2\sigma_{yyy}}{\sigma_{y}}-\frac{3}{2}\left(\frac{\sigma_{yy}}{\sigma_{y}}\right)^2=\frac{2\eta_{yyy}}{\eta_{y}}-\frac{3}{2}\left(\frac{\eta_{yy}}{\eta_{y}}\right)^2,
\end{split}
\end{equation}
which is exactly the Schwarzian derivative. This completes the proof.
\end{proof}

Next, we give the B\"{a}cklund transformation in another way following the idea in \cite{Clarkson-IP-1999}.
For convenience, we rewrite Eq.\eqref{eq:P-II} in equivalent form,
\begin{equation}\label{eq:P-II-1}
\left(\frac{d}{dy}+2p\right)\left(\frac{dp}{dy}-p^2\right)-\frac{2}{3}py+\frac{2}{3}m=0.
\end{equation}
Setting $p=\frac{1}{2}\frac{\sigma_{yy}}{\sigma_{y}}$ and substituting into Eq.\eqref{eq:P-II-1}, we have
\begin{equation}\label{eq:P-II-2}
\left(\frac{d}{dy}+\frac{\sigma_{yy}}{\sigma_{y}}\right)\left[\frac{1}{2}\frac{\sigma_{yyy}}{\sigma_{y}}-\frac{3}{4}\left(\frac{\sigma_{yy}}{\sigma_{y}}\right)^2\right]-\frac{1}{3}\frac{\sigma_{yy}}{\sigma_{y}}y+\frac{2}{3}m=0.
\end{equation}
Notice $\frac{1}{2}\frac{\sigma_{yyy}}{\sigma_{y}}-\frac{3}{4}\left(\frac{\sigma_{yy}}{\sigma_{y}}\right)^2$ is the Schwarzian derivative of $\sigma$, which is invariant under the M\"{o}bius transformation (Proposition \ref{remark-SD}). Thus, it is natural to introduce the M\"{o}bius transformation group $\sigma=\frac{a\eta+b}{c\eta+d}$, $ad-bc\neq 0$, and substitute it into Eq.\eqref{eq:P-II-2},
\begin{equation}\label{eq:P-II-3}
\left(\frac{d}{dy}+\frac{\eta_{yy}}{\eta_{y}}-\frac{2c\eta_{y}}{c\eta+d}\right)\left[\frac{1}{2}\frac{\eta_{yyy}}{\eta_{y}}-\frac{3}{4}\left(\frac{\eta_{yy}}{\eta_{y}}\right)^2\right]-\frac{1}{3}\left(\frac{\eta_{yy}}{\eta_{y}}-\frac{2c\eta_{y}}{c\eta+d}\right)y+\frac{2}{3}m=0.
\end{equation}
Moreover, imposing that Eq.\eqref{eq:P-II-3} satisfiies the transformed Painlev\'e-II equation, i.e. it has the same form as Eq.\eqref{eq:P-II-2}, except with a different constant $\tilde{m}$,
\begin{equation}\label{eq:P-II-4}
\left(\frac{d}{dy}+\frac{\eta_{yy}}{\eta_{y}}\right)\left[\frac{1}{2}\frac{\eta_{yyy}}{\eta_{y}}-\frac{3}{4}\left(\frac{\eta_{yy}}{\eta_{y}}\right)^2\right]-\frac{1}{3}\left(\frac{\eta_{yy}}{\eta_{y}}\right)y+\frac{2}{3}\tilde{m}=0,
\end{equation}
then the identity between the difference of two Painlev\'e-II solutions, $\frac{1}{2}\frac{\sigma_{yy}}{\sigma_{y}}-\frac{1}{2}\frac{\eta_{yy}}{\eta_{y}}=-\frac{c\eta_{y}}{c\eta+d}$, can be regarded as the B\"{a}cklund transformation for Painlev\'e-II solutions with respect to different constants $m$ and $\widetilde{m}$.
Subtracting Eq.\eqref{eq:P-II-4} from Eq.\eqref{eq:P-II-3}, we have
\begin{equation}\label{eq:BT-P-II}
\left(\frac{2c\eta_{y}}{c\eta+d}\right)\left[\frac{1}{2}\frac{\eta_{yyy}}{\eta_{y}}-\frac{3}{4}\left(\frac{\eta_{yy}}{\eta_{y}}\right)^2\right]-\frac{2}{3}\frac{c\eta_{y}}{c\eta+d}y+\frac{2}{3}\left(\tilde{m}-m\right)=0.
\end{equation}
Equivalently,
\begin{equation}\label{eq:P-II-5}
\frac{1}{2}\frac{\eta_{yyy}}{\eta_{y}}-\frac{3}{4}\left(\frac{\eta_{yy}}{\eta_{y}}\right)^2=\frac{(m-\tilde{m})(c\eta+d)}{3c\eta_{y}}+\frac{1}{3}y.
\end{equation}
Next, substitute Eq.\eqref{eq:P-II-5} into Eq.\eqref{eq:P-II-4},
\begin{equation}
\left(\frac{d}{dy}+\frac{\eta_{yy}}{\eta_{y}}\right)\left(\frac{(m-\tilde{m})(c\eta+d)}{3c\eta_{y}}+\frac{1}{3}y\right)-\frac{1}{3}\frac{\eta_{yy}}{\eta_{y}}y+\frac{2}{3}\tilde{m}=0.
\end{equation}
Direct computation shows the relation between $m$ and $\tilde{m}$, that is $m=-1-\tilde{m}$. Plugging $\tilde{p}=\frac{1}{2}\frac{\eta_{yy}}{\eta_{y}}$ into Eq.\eqref{eq:BT-P-II}, the B\"{a}cklund transformation factor $\frac{-c\eta_{y}}{c\eta+d}$ is thus given by
\begin{equation}\label{eq:BT-P-II-2}
\frac{-c\eta_{y}}{c\eta+d}=p-\tilde{p}=-\frac{\left(1+2\tilde{m}\right)}{-3\tilde{p}_{y}+3\tilde{p}^2+y}.
\end{equation}
Furthermore, from this B\"{a}cklund transformation \eqref{eq:BT-P-II-2} and the discrete symmetry $\left(p(y),m\right)\rightarrow \left(-p(y),-m\right)$, the B\"{a}cklund transformation about the Painlev\'{e}-II equation can be given by
\begin{equation}
p=\tilde{p}-\frac{2\tilde{m}\pm1}{\mp3\tilde{p}_y+3\tilde{p}^2+y},\quad m=\mp 1-\tilde{m}.
\end{equation}

In the following section, we will prove in Theorem  \ref{theo-2} that the constant $m_k$ corresponding to the $k$th order rational solution $p^{[k]}$ is $m_{k}=(-1)^{k-1}k$. If we set $(-1)^{k-2}p^{[k-1]}=\tilde{p}, (-1)^{k-2}p^{[k]}=p$, then the two B\"{a}cklund transformations Eq.\eqref{eq:BT-P-II-1} and Eq.\eqref{eq:BT-P-II-2} are equivalent.

\section{The asymptotics for large $y$}
\label{sec:DT-2}
In the last section, through generalized Darboux transformations \eqref{eq:DT}, we obtain the rational solutions in Eq.\eqref{eq:pm}. The B\"{a}cklund transformation \eqref{eq:BT-P-II-1} is also directly given from iteration. In this section, we will continue to discuss some properties of the rational solutions of Painlev\'{e}-II equation. So far, the rational solutions in Eq.\eqref{eq:pm} are in abstract form, which does not lend itself to easy analysis of their properties. In this section, following the idea in \cite{Ling-Phyd-2016}, we convert the generalized Darboux transformation \eqref{eq:DT} into another equivalent formulation and rewrite the rational solutions as a Gram determinant. A great benefit of this type of formula is that it can be easily used to analyze the asymptotics for large $y$. The detailed calculation is shown as follows in Corollary \ref{theo-DT}.
\begin{corollary}\label{theo-DT}
The generalized Darboux transformation matrices \eqref{eq:DT} can be rewritten as the following equivalent formula
\begin{equation}\label{eq:darboux}
\mathbf{T}^{[k]}(\lambda;y)=\mathbb{I}-\mathbf{Y}_k\mathbf{M}^{-1}\mathbf{D}\mathbf{Y}_k^{\dagger}\sigma_3,\qquad 
\end{equation}
where
\begin{equation}\label{DT-ele}
\begin{split}
\mathbf{Y}_k=&\left[\phi^{[0]},\phi^{[1]},\cdots,\phi^{[k-1]}\right], \quad
\mathbf{D}=\begin{bmatrix}
 \frac{1}{\lambda}&0 &\cdots & 0 \\
\frac{1}{\lambda^2}&\frac{1}{\lambda} &\cdots & 0 \\
\vdots &\vdots & &\vdots \\
\frac{1}{\lambda^{k}}&\frac{1}{\lambda^{k-1}} &\cdots &\frac{1}{\lambda} \\
\end{bmatrix},
\end{split}
\end{equation}
with $\phi^{[l]}=\frac{1}{l!}\frac{\partial^{l}\phi_{1}}{\partial \lambda_1^l}\Big|_{\lambda_1=0}$, and the elements $M_{ij}$ in $\mathbf{M}$ are defined by expanding $\frac{\phi_{1}^{\dagger}\sigma_3\phi_{1}}{\lambda_1-\lambda_1^*}$ at $\lambda_1=-\lambda_1^*=0$, that is
\begin{equation}\label{eq:M}
\begin{split}
\frac{\phi_{1}^{\dagger}\sigma_3\phi_{1}}{\lambda_1-\lambda_1^*}&=\frac{-2\ii\sin\left[\left(\lambda_1-\lambda_1^*\right)\left(2\lambda_1^2+2\lambda_1^{*,2}+2\lambda_1\lambda_1^*+y\right)\right]}{\lambda_1-\lambda_1^*}
\\&=\sum_{l=1}^{[\frac{k+1}{2}]}\frac{(-2\ii)(-1)^{l-1}(\lambda_1-\lambda_1^*)^{2l-2}\left(2\lambda_1^2+2\lambda_1^{*,2}+2\lambda_1\lambda_1^*+y\right)^{2l-1}}{(2l-1)!}+\mathscr{O}(\lambda_1-\lambda_1^*)^{2\left[\frac{k+1}{2}\right]}\\
&:=\sum_{i=1}^{k}\sum_{j=1}^{k}M_{ij}\lambda_1^{i-1}\lambda_{1}^{*,j-1}+\mathscr{O}(\lambda_1^{k+1},\lambda_1^{*,k+1}).
\end{split}
\end{equation}
Then the $k$th order rational solutions of Painlev\'{e}-II equation can be given by
\begin{equation}\label{eq:pn}
p^{[k]}=2\ii\frac{\det(\mathbf{G})}{\det(\mathbf{M})},
\end{equation}
where $\mathbf{G}=\begin{bmatrix}\mathbf{M}&Y_{k, 2}^{+}\\-Y_{k, 1}&0
\end{bmatrix}$, and $Y_{k, j}(j=1,2)$ represents the $j$th row of $\mathbf{Y}_{k}$.
\end{corollary}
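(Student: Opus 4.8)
The plan is to first establish the compact representation \eqref{eq:darboux} of the iterated Darboux matrix $\mathbf{T}^{[k]}=\mathbf{T}_k\cdots\mathbf{T}_1$ from Theorem \ref{theo-1}, and then to read off the Gram-determinant formula \eqref{eq:pn} for $p^{[k]}$ as a direct consequence. The mechanism behind \eqref{eq:darboux} is the coalescence (confluent) limit of the classical determinant representation of an ordinary $k$-fold Darboux transformation, in the spirit of \cite{Ling-Phyd-2016}. The degeneracy $\lambda_1=\cdots=\lambda_k=0$ forced upon us in Theorem \ref{theo-1} is precisely what turns eigenvectors into the Taylor jets $\phi^{[l]}$ and turns the diagonal resolvent into the lower-triangular Toeplitz matrix $\mathbf{D}$.

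\emph{Step 1 (distinct spectral points).} First I would treat $k$ distinct points $\lambda_1,\dots,\lambda_k\in\ii\mathbb{R}$ and record the standard representation of the ordinary $k$-fold AKNS Darboux matrix, $\mathbb{I}-\mathbf{Y}\widetilde{\mathbf{M}}^{-1}\widetilde{\mathbf{D}}\mathbf{Y}^{\dagger}\sigma_3$, where $\mathbf{Y}=\left[\phi_1,\dots,\phi_k\right]$, $\widetilde{\mathbf{D}}=\mathrm{diag}\!\left((\lambda-\lambda_j^*)^{-1}\right)$, and $\widetilde{\mathbf{M}}$ is the Gram-type matrix with entries $\phi_i^{\dagger}\sigma_3\phi_j/(\lambda_j-\lambda_i^*)$. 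The cleanest justification is the uniqueness of rational dressing matrices: both $\mathbf{T}_k\cdots\mathbf{T}_1$ and $\mathbb{I}-\mathbf{Y}\widetilde{\mathbf{M}}^{-1}\widetilde{\mathbf{D}}\mathbf{Y}^{\dagger}\sigma_3$ are $2\times2$ rational in $\lambda$, tend to $\mathbb{I}$ at $\lambda=\infty$, have only simple poles at the points $\lambda_j^*$, and obey the same kernel conditions $\mathbf{T}^{[k]}(\lambda_j)\phi_j=0$. Writing out these conditions yields a linear system whose solution by Cramer's rule produces exactly the factor $\widetilde{\mathbf{M}}^{-1}$, so the two expressions coincide.

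\emph{Step 2 (confluent limit).} Next I would send $\lambda_j\to0$. Since the generalized transformation replaces each $\phi_j$ by the derivative jet in \eqref{eq:phin1}, the coalescence sends the columns of $\mathbf{Y}$ to the Taylor coefficients $\phi^{[l]}=\tfrac{1}{l!}\partial_{\lambda_1}^{l}\phi_1\big|_{\lambda_1=0}$, that is to $\mathbf{Y}_k$ of \eqref{DT-ele}. Expanding the pairing $\phi_1^{\dagger}\sigma_3\phi_1/(\lambda_1-\lambda_1^*)$ exactly as in \eqref{eq:M} identifies the limit of $\widetilde{\mathbf{M}}$ with the coefficient matrix $\mathbf{M}$, while the resolvent expansion $(\lambda-\lambda_j^*)^{-1}=\sum_{n\ge0}(\lambda_j^*)^{n}\lambda^{-n-1}$, with the powers $(\lambda_j^*)^{n}$ absorbed into the differentiation that builds $\mathbf{Y}_k^{\dagger}$, reorganizes $\widetilde{\mathbf{D}}$ into the lower-triangular Toeplitz matrix $\mathbf{D}$ whose $(i,j)$ entry is $\lambda^{-(i-j+1)}$. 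Assembling the three limits yields \eqref{eq:darboux}. I expect this coalescence bookkeeping to be the main obstacle: one must justify interchanging the limit with the inverse $\widetilde{\mathbf{M}}^{-1}$, equivalently verifying $\det\mathbf{M}\neq0$ so that the confluent Gram matrix remains invertible, and must match the derivative pattern of $\mathbf{D}$, $\mathbf{Y}_k$ and $\mathbf{M}$ index by index. Alternatively one can bypass the limit and verify \eqref{eq:darboux} directly by the same uniqueness argument, now checking the confluent kernel conditions $\partial_{\lambda}^{l}\!\left[\mathbf{T}^{[k]}(\lambda)\phi_1(\lambda)\right]\big|_{\lambda=0}=0$ for $0\le l\le k-1$.

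\emph{Step 3 (determinant formula).} Finally, \eqref{eq:pn} follows from \eqref{eq:darboux}. By Theorem \ref{theo-1}, $p^{[k]}=2\ii\lim_{\lambda\to\infty}\lambda\,\mathbf{T}^{[k]}(\lambda)_{12}$. Since $\mathbf{D}$ is lower triangular with diagonal $\lambda^{-1}$, one has $\lim_{\lambda\to\infty}\lambda\mathbf{D}=\mathbb{I}_k$, so multiplying the $(1,2)$ entry of $-\mathbf{Y}_k\mathbf{M}^{-1}\mathbf{D}\mathbf{Y}_k^{\dagger}\sigma_3$ by $\lambda$ and letting $\lambda\to\infty$ gives $Y_{k,1}\mathbf{M}^{-1}Y_{k,2}^{+}$, the two sign changes coming from the leading $-$ and from $\sigma_3$ cancelling. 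Hence $p^{[k]}=2\ii\,Y_{k,1}\mathbf{M}^{-1}Y_{k,2}^{+}$. Applying the Schur-complement (bordered-determinant) identity $u\mathbf{M}^{-1}v=-\det\!\left[\begin{smallmatrix}\mathbf{M}&v\\ u&0\end{smallmatrix}\right]\big/\det\mathbf{M}$ with $u=Y_{k,1}$ and $v=Y_{k,2}^{+}$, and absorbing the minus sign into the bottom row $-Y_{k,1}$ of $\mathbf{G}$, produces exactly $p^{[k]}=2\ii\,\det(\mathbf{G})/\det(\mathbf{M})$, as claimed.
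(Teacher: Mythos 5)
Your proposal follows essentially the same route as the paper: start from the determinant form of the $k$-fold Darboux matrix at distinct spectral points (the paper cites \cite{Ling-Phyd-2016} for this), take the confluent limit $\lambda_j\to 0$ to obtain \eqref{eq:darboux}, and then extract $p^{[k]}$ from the large-$\lambda$ coefficient via the bordered-determinant identity. Your Steps 1 and 3 supply details (the uniqueness/kernel-condition justification and the explicit Schur-complement sign bookkeeping) that the paper leaves implicit, but the argument is the same and your computations are correct.
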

\begin{proof}
A simple way to obtain the generalized Darboux transformation is by using the limiting technique.
We use an alternative form of the $k$-fold Darboux transformation matrix, see \cite{Ling-Phyd-2016}
\begin{equation}\label{DT-3}
\widehat{\mathbf{T}}^{[k]}(\lambda; y)=\mathbb{I}+\widehat{\mathbf{Y}}\widehat{\mathbf{M}}^{-1}\widehat{\mathbf{D}}^{-1}\widehat{\mathbf{Y}}^{\dagger}\sigma_3,
\end{equation}
where $\widehat{\mathbf{Y}}=\left[\phi_{1}, \phi_{2}, \cdots, \phi_{k}\right]$, $\phi_{k}$ is the special solution of Lax pair Eq.\eqref{eq:Lax-pair}, $\phi_k=\mathrm{e}^{(-\mathrm{i}\lambda_ky-2\mathrm{i}\lambda_k^3)\sigma_3}(1,-1)^{\mathsf{T}}$, with $\lambda=\lambda_k\in {\rm i}\mathbb{R}$, and $\widehat{\mathbf{M}}=\left(\frac{\phi_i^{\dagger}\sigma_3\phi_{j}}{\lambda_i^*-\lambda_j}\right)_{1\leq,i,j\leq k}, \widehat{\mathbf{D}}={\rm diag}\left(\lambda-\lambda_1^*, \lambda-\lambda_2^*, \cdots, \lambda-\lambda_k^*\right)$.

Recall the rational solutions of Painlev\'{e}-II equation is recovered by Eq.\eqref{eq:pm}.
We have proven that all the spectral parameters $\lambda_i$ $(i=1,2,\cdots, k)$ are equal to $0$, so we set the spectral parameters $\lambda_2, \lambda_3, \cdots, \lambda_{k}$ as
\begin{equation}
\lambda_2=\epsilon_{2},\quad \lambda_3=\epsilon_{3}, \cdots, \lambda_{k}=\epsilon_{k},\quad \phi_{2}=\phi_1(\epsilon_2), \quad \phi_{3}=\phi_{1}(\epsilon_3),\quad \phi_{k}=\phi_{1}(\epsilon_k).
\end{equation}
Directly substitute the above to the Darboux matrix \eqref{DT-3} and take the limit $\epsilon_{i}\to 0$, $i=2,3, \cdots, k$, then we can get the formula \eqref{eq:darboux}. Correspondingly, the rational solution of Painlev\'{e}-II equation can be given with the following formula:

\begin{equation}
p^{[k]}=2\ii\lim\limits_{\lambda\to\infty}\lambda\mathbf{T}^{[k]}_{12}(\lambda; y),
\end{equation}
which equals to Eq.\eqref{eq:pn}, it completes the proof.
\end{proof}

Apparently the rational solutions in Eq.\eqref{eq:pm} and Eq.\eqref{eq:pn} have different forms, each of which has its own advantage. The former can be used to derive the B\"{a}cklund transformation, to which the latter does not have an obvious connection. However, the latter writes all rational Painlev\'e-II solution in a compact formula. Moreover, the formula is made up of determinants, which renders analytical properties for large $y$ asymptotics very accessible. Using the new representation, we find the constant $m_k$ in Eq.\eqref{eq:P-II} for the $k$th order rational solution $p^{[k]}$, thus also completing the B\"acklund transformation Eq.\eqref{eq:BT-P-II-1}.
\begin{theorem}\label{theo-2}
The constant $m_k$ in the Painlev\'e-II equation Eq.\eqref{eq:P-II}, corresponding to the $k$th order rational solutions $p^{[k]}$ in Eq.\eqref{eq:pn} is equal to $(-1)^{k-1}k$.
\end{theorem}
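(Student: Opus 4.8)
The plan is to read $m_k$ off directly from the determinantal representation \eqref{eq:pn}, using that $m_k$ is a constant and may therefore be computed from the behaviour of $p^{[k]}$ at any convenient value of $y$; the natural choice is $y\to\infty$. First I would record the elementary fact that any rational solution $p$ of \eqref{eq:P-II} which decays at infinity must satisfy $p\sim m/y$: substituting $p=c_1/y+o(1/y)$ into \eqref{eq:P-II} and collecting the $O(1)$ terms gives $-\tfrac23 c_1+\tfrac23 m=0$, so that $m=\lim_{y\to\infty}y\,p(y)$. Hence it suffices to show $\lim_{y\to\infty}y\,p^{[k]}(y)=(-1)^{k-1}k$, and by \eqref{eq:pn} this equals $2\ii\lim_{y\to\infty}y\,\det(\mathbf{G})/\det(\mathbf{M})$.

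The next step is a degree count in $y$. From \eqref{eq:M} each entry $M_{ij}$ is a polynomial in $y$ whose top term comes from the $y$-leading part $\tfrac{-2\ii\sin[(\lambda_1-\lambda_1^*)y]}{\lambda_1-\lambda_1^*}$ of the generating function; expanding $\tfrac{\sin z}{z}$ shows $M_{ij}\sim c_{ij}\,y^{\,i+j-1}$ when $i+j$ is even, while $M_{ij}$ has strictly smaller degree when $i+j$ is odd, with $c_{ij}$ an explicit ratio of factorials. A permutation expansion then gives $\deg_y\det(\mathbf{M})=k^2$, with only the parity-preserving permutations contributing to the top coefficient. The bordered structure $\mathbf{G}=\begin{bmatrix}\mathbf{M}&Y_{k,2}^{+}\\-Y_{k,1}&0\end{bmatrix}$, together with the facts that the entries of $Y_{k,1},Y_{k,2}$ have degrees $0,1,\dots,k-1$ with explicit factorial leading coefficients, gives $\deg_y\det(\mathbf{G})=k^2-1$. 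Thus the limit is finite and equals $2\ii$ times the ratio of these two leading coefficients.

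It remains to evaluate those coefficients. Because the $y$-leading matrix of $\mathbf{M}$ has a checkerboard pattern (the $i+j$ odd entries drop out at top order), its top coefficient factorises into the product of the determinants of the odd-odd and even-even blocks; after pulling out $1/((i-1)!(j-1)!)$ each block is of Cauchy type $[\,1/(i+j-1)\,]=[\,1/(x_i+y_j)\,]$ with $x_i=i-\tfrac12$, $y_j=j-\tfrac12$, hence computable in closed form by the Cauchy determinant formula. The same Cauchy evaluation applied to the cofactors appearing in the expansion of $\det(\mathbf{G})$ along its border (where only indices with $i\equiv j\pmod 2$ survive at top order) yields its leading coefficient. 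Forming the ratio, multiplying by $2\ii$, and simplifying the resulting products of factorials should collapse everything to $(-1)^{k-1}k$. I expect this last simplification --- carrying the signs and factorial products through the Cauchy determinants of both $\mathbf{M}$ and the bordered $\mathbf{G}$ and verifying that all but a single linear-in-$k$ factor cancels --- to be the main obstacle.

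Finally, as an independent check that bypasses the asymptotic analysis, I would run the induction already set up in Section \ref{sec:DT-1}. The base case $m_1=1=(-1)^0\cdot 1$ is established in the text, and for the inductive step I would invoke the equivalence, noted just before Theorem \ref{theo-2}, of the two B\"acklund transformations \eqref{eq:BT-P-II-1} and \eqref{eq:BT-P-II-2}. Writing $(-1)^{k-2}p^{[k-1]}=\tilde p$ turns the inductive hypothesis $m_{k-1}=(-1)^{k-2}(k-1)$ into the parameter $\tilde m=k-1$ for $\tilde p$ (using the discrete symmetry $(p,m)\mapsto(-p,-m)$), and then the M\"obius relation $m=-1-\tilde m$ forces $(-1)^{k}m_k=-k$, that is $m_k=(-1)^{k-1}k$. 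This route is deterministic --- there is no spurious quadratic branch to discard --- and it confirms the value obtained from the determinant.
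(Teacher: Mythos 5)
Your main argument is essentially the paper's own proof: expand $\det(\mathbf{M})$ and $\det(\mathbf{G})$ for large $y$, use the checkerboard leading structure of $M_{ij}$ to get degrees $k^2$ and $k^2-1$, and evaluate the resulting Hilbert/Cauchy-type blocks to obtain $p^{[k]}\sim (-1)^{k-1}k/y$ and hence $m_k=(-1)^{k-1}k$ from the $O(1)$ balance in \eqref{eq:P-II}; your explicit Cauchy-determinant evaluation just spells out what the paper calls a ``basic property of Hilbert matrices.'' One caution about your ``independent check'': the equivalence of the B\"acklund transformations \eqref{eq:BT-P-II-1} and \eqref{eq:BT-P-II-2} is stated in the paper as a \emph{consequence} of Theorem \ref{theo-2}, so invoking it in the inductive step is circular unless you first establish, independently, that the Darboux step from $p^{[k-1]}$ to $p^{[k]}$ coincides with the unique nontrivial M\"obius B\"acklund transformation (i.e., that no other constant numerator in \eqref{eq:BT-P-II-1} is compatible with both functions solving Painlev\'e-II).
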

\begin{proof}
Following the idea in \cite{Zhang-Phyd-2021}, we first expand $p^{[k]}$ Eq.\eqref{eq:pn} at the neighborhood of $y=\infty$ and substitute this series into the Painlev\'{e}-II equation \eqref{p-II}. Then the constant $m_k$ can be calculated by comparing the coefficients of $y$ polynomials. Clearly, the rational solution Eq.\eqref{eq:pn}  is the quotient of two Gram determinants, thus we can look at the denominator and the numerator separately. Firstly, we discuss the asymptotics of the denominator $\det(\mathbf{M})$. From the definition of $M_{ij}$ in Eq.\eqref{eq:M}, we know that while $M_{ij}$ does not generally admit simple formula, the leading behavior for large $y$ is quite easy to compute:
\begin{equation}\label{eq:aym-M}
M_{ij}=\left\{\begin{aligned}&(-2\ii y)(-1)^{\frac{i-j}{2}}\frac{y^{i+j-2}}{(i-1)!(j-1)!(i+j-1)}+\mathscr{O}(y^{i+j-5}), & \quad i+j\,\, \text{is\,\, even}, \\
&0,& \quad i+j\,\, \text{is\,\, odd}.\;\end{aligned}\right.
\end{equation}
Based on this asymptotical expression in Eq.\eqref{eq:aym-M}, when $y$ is large and $k$ is even, the denominator $\det\left(\mathbf{M}\right)$ can be expanded in the following formula:

\begin{align}
\begin{split}
\det(\mathbf{M})
&=\left|\begin{matrix}
-2\ii y&0&\cdots&\frac{-2\ii y(-\ii y)^{k-2}}{(k-2)!(k-1)}&0\\
0&\frac{2\ii y(-\ii y)^2}{3}&\cdots&0&\frac{2\ii y(-\ii y)^{k}}{(k-1)!(k+1)}\\
\frac{-2\ii y(-\ii y)^2}{2!3}&0&\cdots&\frac{-2\ii y(-\ii y)^{k}}{2!(k-2)!(k+1)}&0\\
0&\frac{2\ii y(-\ii y)^{4}}{3!5}&\cdots&0&\frac{2\ii y(-\ii y)^{k+2}}{3!(k-1)!(k+3)}\\
\vdots&\vdots&\vdots&\vdots&\vdots\\
\frac{-2\ii y(-\ii y)^{k-2}}{(k-2)!(k-1)}&0&\cdots&\frac{-2\ii y(-\ii y)^{2k-4}}{(k-2)!(k-2)!(2k-3)}&0\\
0&\frac{2\ii y(-\ii y)^{k}}{(k-1)!(k+1)}&\cdots&0&\frac{2\ii y(-\ii y)^{2k-2}}{(k-1)!(k-1)!(2k-1)}
\end{matrix}\right|+\mathcal{O}(y^{k^2-1}) \\
&=\left|\begin{matrix}-2\ii y&\cdots&\frac{-2\ii y(-\ii y)^{k-2}}{(k-2)!\left(k-1\right)}&0&\cdots&0\\
\frac{-2\ii y(-\ii y)^2}{2!3}&\cdots&\frac{-2\ii y(-\ii y)^{k}}{2!(k-2)!(k+1)}&0&\cdots&0\\
\vdots&\vdots&\vdots&\vdots&\vdots&\vdots\\
\frac{-2\ii y(-\ii y)^{k-2}}{(k-2)!(k-1)}&\cdots&\frac{-2\ii y(-\ii y)^{2k-4}}{((k-2)!)^2(2k-3)}&0&\cdots&0\\
0&\cdots&0&\frac{2\ii y(-\ii y)^2}{3}&\cdots&\frac{2\ii y(-\ii y)^{k}}{(k-1)!(k+1)}\\
0&\cdots&0&\frac{2\ii y(-\ii y)^{4}}{3!5}&\cdots&\frac{2\ii y(-\ii y)^{k+2}}{3!(k-1)!(k+3)}\\
0&\cdots&0&\frac{2\ii y(-\ii y)^{k}}{(k-1)!(k+1)}&\cdots&\frac{2\ii y(-\ii y)^{2k-2}}{((k-1)!)^2(2k-1)}
\end{matrix}\right|+\mathcal{O}(y^{k^2-1})\\
&=\frac{2^{k}(-\ii y)^{k^2}}{\left(2!3!\cdots(k-1)!\right)^2}\left|\begin{matrix}1&\frac{1}{3}&\cdots&\frac{1}{k-1}&0&\cdots&\cdots&0\\
\frac{1}{3}&\frac{1}{5}&\cdots&\frac{1}{k+1}&0&\cdots&\cdots&0\\
\vdots&\vdots&\vdots&\vdots&\vdots&\vdots&\vdots&\vdots\\
\frac{1}{k-1}&\frac{1}{k+1}&\cdots&\frac{1}{2k-3}&0&\cdots&\cdots&0\\
0&\cdots&\cdots&0&-\frac{1}{3}&-\frac{1}{5}&\cdots&-\frac{1}{k+1}\\
0&\cdots&\cdots&0&-\frac{1}{5}&-\frac{1}{7}&\cdots&-\frac{1}{k+3}\\
\vdots&\vdots&\vdots&\vdots&\vdots&\vdots&\vdots&\vdots\\
0&\cdots&\cdots&0&-\frac{1}{k+1}&-\frac{1}{k+3}&\cdots&-\frac{1}{2k-1}
\end{matrix}\right|+\mathcal{O}(y^{k^2-1}).
\end{split}
\end{align}
Similarly, when $y$ is large, the numerator $\det(\mathbf{G})$ can be expanded as
\begin{equation}
\begin{split}
\det(\mathbf{G})
&=\left|\begin{matrix}-2\ii y&\cdots&\frac{-2\ii y(-\ii y)^{k-2}}{(k-2)!\left(k-1\right)}&0&\cdots&0&1\\
\frac{-2\ii y(-\ii y)^2}{2!3}&\cdots&\frac{-2\ii y(-\ii y)^{k}}{2!(k-2)!(k+1)}&0&\cdots&0&\frac{(-\ii y)^2}{2!}\\
\vdots&\vdots&\vdots&\vdots&\vdots&\vdots&\vdots\\
\frac{-2\ii y(-\ii y)^{k-2}}{(k-2)!(k-1)}&\cdots&\frac{-2\ii y(-\ii y)^{2k-4}}{((k-2)!)^2(2k-3)}&0&\cdots&0&\frac{(-\ii y)^{k-2}}{(k-2)!}\\
0&\cdots&0&\frac{2\ii y(-\ii y)^2}{3}&\cdots&\frac{2\ii y(-\ii y)^{k}}{(k-1)!(k+1)}&-\ii y\\
0&\cdots&0&\frac{2\ii y(-\ii y)^{4}}{3!5}&\cdots&\frac{2\ii y(-\ii y)^{k+2}}{3!(k-1)!(k+3)}&\frac{(-\ii y)^{3}}{3!}\\
\vdots&\vdots&\vdots&\vdots&\vdots&\vdots&\vdots\\
0&\cdots&0&\frac{2\ii y(-\ii y)^{k}}{(k-1)!(k+1)}&\cdots&\frac{2\ii y(-\ii y)^{2k-2}}{((k-1)!)^2(2k-1)}&\frac{(-\ii y)^{k-1}}{(k-1)!}\\
1&\cdots&\frac{(-\ii y)^{k-2}}{(k-2)!}&-\ii y&\cdots&\frac{(-\ii y)^{k-1}}{(k-1)!}&0
\end{matrix}\right|+\mathcal{O}(y^{k^2-2}) \\
\end{split}
\end{equation}
\begin{equation*}
\begin{split}
&=\frac{2^{k}(-\ii y)^{k^2-1}}{\left(2!3!\cdots(k-1)!\right)^2}\left|\begin{matrix}1&\frac{1}{3}&\cdots&\frac{1}{k-1}&0&\cdots&\cdots&0&\frac{1}{2}\\
\frac{1}{3}&\frac{1}{5}&\cdots&\frac{1}{k+1}&0&\cdots&\cdots&0&\frac{1}{2}\\
\vdots&\vdots&\vdots&\vdots&\vdots&\vdots&\vdots&\vdots&\vdots\\
\frac{1}{k-1}&\frac{1}{k+1}&\cdots&\frac{1}{2k-3}&0&\cdots&\cdots&0&\frac{1}{2}\\
0&\cdots&\cdots&0&-\frac{1}{3}&-\frac{1}{5}&\cdots&-\frac{1}{k+1}&\frac{1}{2}\\
0&\cdots&\cdots&0&-\frac{1}{5}&-\frac{1}{7}&\cdots&-\frac{1}{k+3}&\frac{1}{2}\\
\vdots&\vdots&\vdots&\vdots&\vdots&\vdots&\vdots&\vdots&\vdots\\
0&\cdots&\cdots&0&-\frac{1}{k+1}&-\frac{1}{k+3}&\cdots&-\frac{1}{2k-1}&\frac{1}{2}\\
1&\cdots&\cdots&1&1&\cdots&\cdots&1&0
\end{matrix}\right|+\mathcal{O}(y^{k^2-2}).
\end{split}
\end{equation*}
Therefore,
\begin{equation}
p^{[k]}=-\frac{2}{y}\frac{\left|\begin{matrix}\mathbf{H}_{\rm e}&\mathbf{\frac{1}{2}}\\
\mathbf{1}&0
\end{matrix}\right|}{\det(\mathbf{H}_{\rm e})}+\mathcal{O}(y^{-2}),
\end{equation}
where $\mathbf{H}_{\rm e}$ is a block matrix defined by
\begin{equation}
\mathbf{H}_{\rm e}=\begin{bmatrix}1&\frac{1}{3}&\cdots&\frac{1}{k-1}&0&\cdots&\cdots&0\\
\frac{1}{3}&\frac{1}{5}&\cdots&\frac{1}{k+1}&0&\cdots&\cdots&0\\
\vdots&\vdots&\vdots&\vdots&\vdots&\vdots&\vdots&\vdots\\
\frac{1}{k-1}&\frac{1}{k+1}&\cdots&\frac{1}{2k-3}&0&\cdots&\cdots&0\\
0&\cdots&\cdots&0&-\frac{1}{3}&-\frac{1}{5}&\cdots&-\frac{1}{k+1}\\
0&\cdots&\cdots&0&-\frac{1}{5}&-\frac{1}{7}&\cdots&-\frac{1}{k+3}\\
\vdots&\vdots&\vdots&\vdots&\vdots&\vdots&\vdots&\vdots\\
0&\cdots&\cdots&0&-\frac{1}{k+1}&-\frac{1}{k+3}&\cdots&-\frac{1}{2k-1}
\end{bmatrix}.
\end{equation}
Here we use the subscript $_{\rm e}$ to mean the even case.
With a simple calculation, we have
\begin{equation}\label{eq:pm-asy}
p^{[k]}=-\frac{2}{y}\left(-\left(1,1,\cdots, 1\right)\mathbf{H}_{\rm e}^{-1}\left(\frac{1}{2}, \frac{1}{2}, \cdots, \frac{1}{2}\right)^{\T}\right)+\mathcal{O}(y^{-2})=\frac{\sum_{i,j=1}^{k}\left(\mathbf{H}_{\rm e}^{-1}\right)_{i,j}}{y}+\mathcal{O}(y^{-2}).
\end{equation}
The upper left block of $\mathbf{H}_{\rm e}$ is a Hilbert matrix, while the lower right block is the negative of a Hilbert matrix. Employing basic property of Hilbert matrices, we can derive the following identity:
$$\sum_{i,j=1}^{k}\left(\mathbf{H}_{\rm e}^{-1}\right)_{i,j}=-k.$$
Hence $p^{[k]}=-\frac{k}{y}+\mathcal{O}(y^{-2})$. Substituting into Eq. \eqref{eq:P-II} and comparing the coefficient of $y^{0}$, we know that the constant $m_{k}$ must be $-k$. Analogously, when $k$ is odd, we may still expand both the denominator and the numerator. In this case, the denominator and the numerator have the following expansions as $y\to\infty$:
\begin{equation}
\det(\mathbf{M})=\frac{2^{k}(-\ii y)^{k^2}}{\left(2!3!\cdots(k-1)!\right)^2}\left|\begin{matrix}1&\frac{1}{3}&\cdots&\frac{1}{k}&0&\cdots&\cdots&0\\
\frac{1}{3}&\frac{1}{5}&\cdots&\frac{1}{k+2}&0&\cdots&\cdots&0\\
\vdots&\vdots&\vdots&\vdots&\vdots&\vdots&\vdots&\vdots\\
\frac{1}{k}&\frac{1}{k+2}&\cdots&\frac{1}{2k-1}&0&\cdots&\cdots&0\\
0&\cdots&\cdots&0&-\frac{1}{3}&-\frac{1}{5}&\cdots&-\frac{1}{k}\\
0&\cdots&\cdots&0&-\frac{1}{5}&-\frac{1}{7}&\cdots&-\frac{1}{k+2}\\
\vdots&\vdots&\vdots&\vdots&\vdots&\vdots&\vdots&\vdots\\
0&\cdots&\cdots&0&-\frac{1}{k}&-\frac{1}{k+2}&\cdots&-\frac{1}{2k-3}
\end{matrix}\right|+\mathcal{O}(y^{k^2-1}),
\end{equation}
and
\begin{equation}
\det(\mathbf{G})=
\frac{2^{k}(-\ii y)^{k^2-1}}{\left(2!3!\cdots(k-1)!\right)^2}\left|\begin{matrix}1&\frac{1}{3}&\cdots&\frac{1}{k}&0&\cdots&\cdots&0&\frac{1}{2}\\
\frac{1}{3}&\frac{1}{5}&\cdots&\frac{1}{k+2}&0&\cdots&\cdots&0&\frac{1}{2}\\
\vdots&\vdots&\vdots&\vdots&\vdots&\vdots&\vdots&\vdots&\vdots\\
\frac{1}{k}&\frac{1}{k+2}&\cdots&\frac{1}{2k-1}&0&\cdots&\cdots&0&\frac{1}{2}\\
0&\cdots&\cdots&0&-\frac{1}{3}&-\frac{1}{5}&\cdots&-\frac{1}{k}&\frac{1}{2}\\
0&\cdots&\cdots&0&-\frac{1}{5}&-\frac{1}{7}&\cdots&-\frac{1}{k+2}&\frac{1}{2}\\
\vdots&\vdots&\vdots&\vdots&\vdots&\vdots&\vdots&\vdots&\vdots\\
0&\cdots&\cdots&0&-\frac{1}{k}&-\frac{1}{k+2}&\cdots&-\frac{1}{2k-3}&\frac{1}{2}\\
1&\cdots&\cdots&1&1&\cdots&\cdots&1&0
\end{matrix}\right|+\mathcal{O}(y^{k^2-2}).
\end{equation}
A similar asymptotic formula as Eq.\eqref{eq:pm-asy} for $p^{[k]}$ with odd $k$ can also be obtained, with $\mathbf{H}_\mathrm{e}$ replaced by $\mathbf{H}_\mathrm{o}$. The subscript $_{\rm o}$ indicates the odd case. In the odd case, $\mathbf{H}_{\rm o}$ is instead given by
\begin{equation}
\mathbf{H}_{\rm o}=\begin{bmatrix}1&\frac{1}{3}&\cdots&\frac{1}{k}&0&\cdots&\cdots&0\\
\frac{1}{3}&\frac{1}{5}&\cdots&\frac{1}{k+2}&0&\cdots&\cdots&0\\
\vdots&\vdots&\vdots&\vdots&\vdots&\vdots&\vdots&\vdots\\
\frac{1}{k}&\frac{1}{k+2}&\cdots&\frac{1}{2k-1}&0&\cdots&\cdots&0\\
0&\cdots&\cdots&0&-\frac{1}{3}&-\frac{1}{5}&\cdots&-\frac{1}{k}\\
0&\cdots&\cdots&0&-\frac{1}{5}&-\frac{1}{7}&\cdots&-\frac{1}{k+2}\\
\vdots&\vdots&\vdots&\vdots&\vdots&\vdots&\vdots&\vdots\\
0&\cdots&\cdots&0&-\frac{1}{k}&-\frac{1}{k+2}&\cdots&-\frac{1}{2k-3}
\end{bmatrix}.
\end{equation}
For this matrix, we have $$\sum_{i,j=1}^{k}\left(\mathbf{H}_{\rm o}^{-1}\right)_{i,j}=k.$$
Consequently, the asymptotics of $p^{[k]}$ becomes $p^{[k]}=\frac{k}{y}+\mathcal{O}(y^{-2})$. Substituting this expression into Eq.\eqref{eq:P-II}, then we obtain that the constant $m_k$ is equal to $k$.
\end{proof}

\noindent{\bf{Example:}}

Next, we compute here the first several rational solutions derived from the generalized Darboux transformation. When $k=1$, the Darboux transformation $\mathbf{T}_{1}(\lambda; y)$ and the first order solution $p^{[1]}$ are
\begin{equation}
\mathbf{T}_{1}(\lambda; y)=\mathbb{I}-\frac{\ii}{2\lambda y}\begin{bmatrix}-1&-1\\1&1
\end{bmatrix},\quad p^{[1]}=\frac{1}{y}.
\end{equation}
When $k=2$, the Darboux transformation $\mathbf{T}^{[2]}(\lambda; y)$ and the second order rational solution $p^{[2]}$ are
\begin{equation}
\mathbf{T}^{[2]}(\lambda; y)=\mathbb{I}+\begin{bmatrix}\frac{-4\ii\lambda y^3-6\ii\lambda-3y^2}{2\lambda^2y\left(y^3+6\right)}&\frac{2\ii\lambda y^3-6\ii\lambda-3y^2}{2\lambda^2y\left(y^3+6\right)}\\-\frac{2\ii\lambda y^3-6\ii\lambda+3y^2}{2\lambda^2y\left(y^3+6\right)}&\frac{4\ii\lambda y^3+6\ii\lambda+3y^2}{2\lambda^2y\left(y^3+6\right)}
\end{bmatrix},\quad p^{[2]}=-\frac{2\left(y^3-3\right)}{y\left(y^3+6\right)}.
\end{equation}

Obviously, the first order rational solution $p^{[1]}$ satisfies the Painlev\'{e}-II equation \eqref{eq:P-II} with $m_1=1$ and the second order rational solution $p^{[2]}$ satisfies it with $m_2=-2$. Similar to the rational solutions of KdV equation \cite{Ablowitz-JMP-1978}, both the rational solutions $p^{[1]}$ and $p^{[2]}$ are meromorphic functions with simple poles. Indeed these solutions have different analytic properties from the rogue waves, even though in \cite{Yang-phyd-2021-1} the authors showed the close connection of the two, which is an important motivation for our study. In Fig.\ref{p-II}, we show the location of zeros and poles of higher order rational solutions.
\begin{figure}[!h]
\centering
\includegraphics[width=0.3\textwidth]{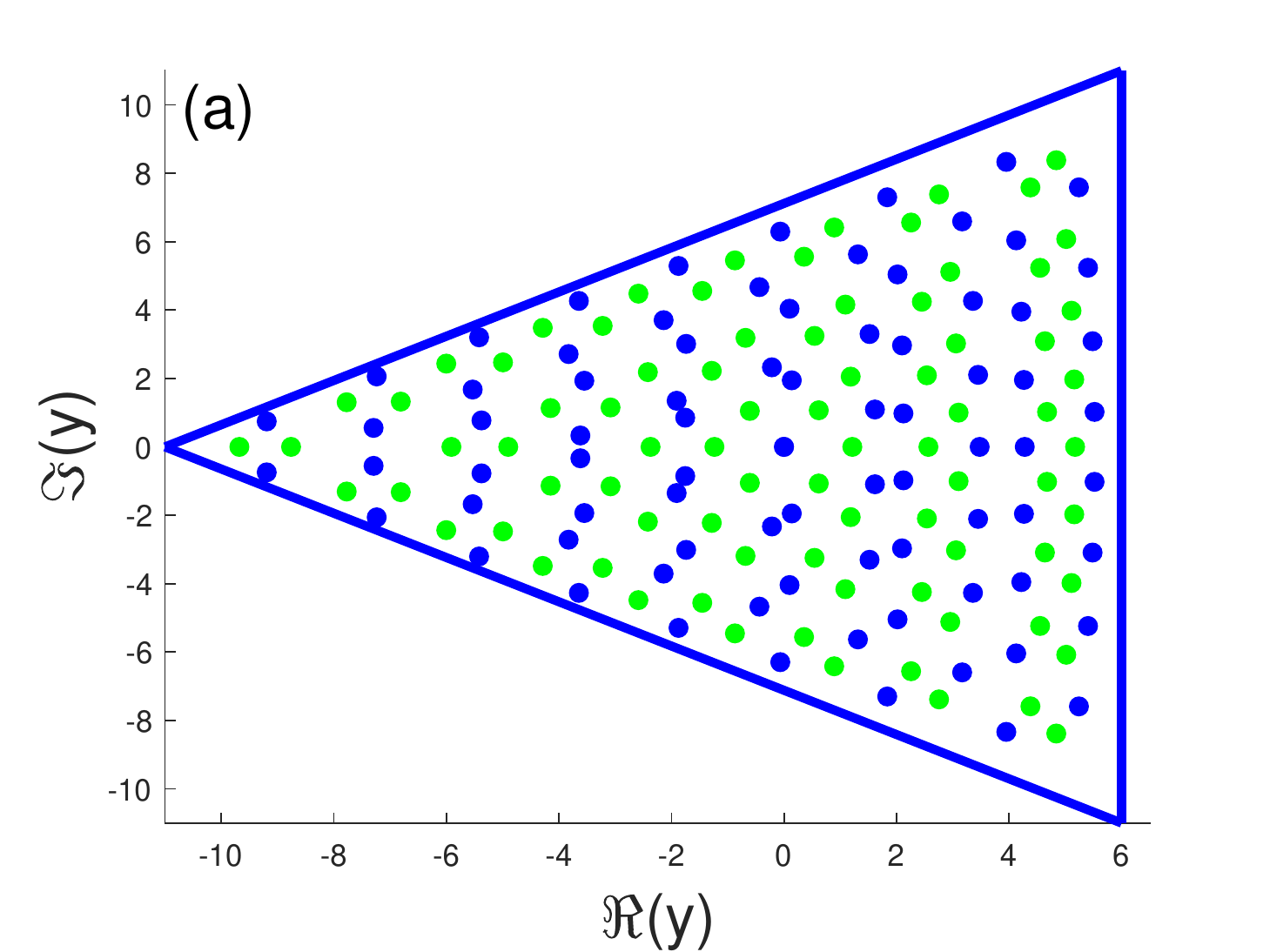}
\centering
\includegraphics[width=0.3\textwidth]{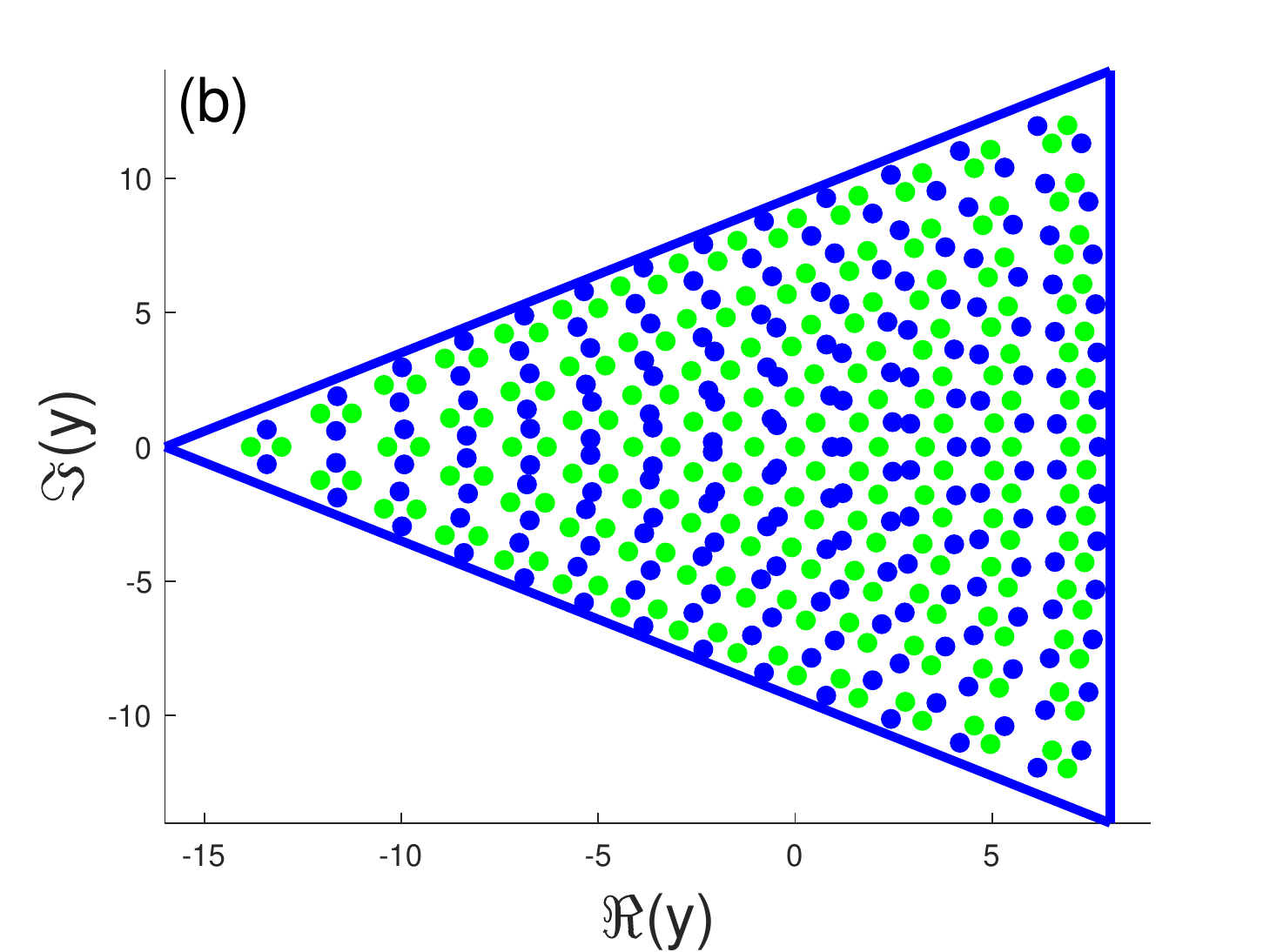}
\centering
\includegraphics[width=0.3\textwidth]{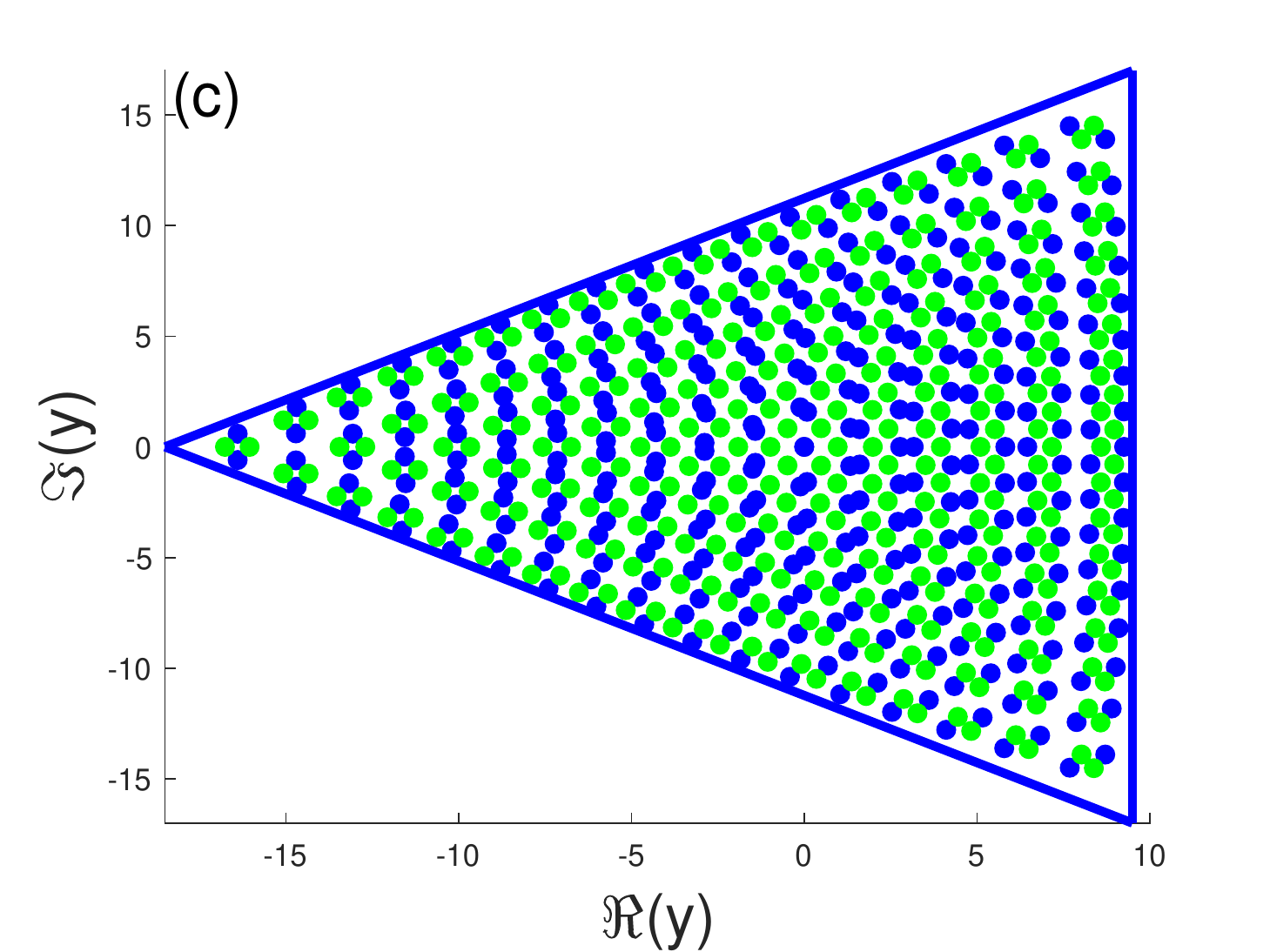}
\caption{The poles (Green) and the zeros (Blue) of the rational solutions of Painlev\'{e}-II equation. The corresponding parameters are $m_9=9$  (left), $m_{14}=-14$ (middle) and $m_{18}=-18$ (right), respectively.}
\label{p-II}
\end{figure}

\section{Riemann-Hilbert representation about the rational solutions of Painlev\'{e}-II equation}\label{sec:RH}
In the previous two sections, we constructed two types of Darboux transformations about the rational solutions and derived the corresponding B\"{a}cklund transformation. Both of them are studied under the Flaschka-Newell Lax representation Eq.\eqref{eq:Lax-pair}. It is well known that the Painlev\'{e}-II equation has three different Lax representations, i.e. the Flaschka-Newell Lax representation, the Jimbo-Miwa Lax representation and Bortola-Bothner representation. In \cite{Peter-Sigma-2017}, the authors gave a detailed introduction of the three representations and constructed three different Riemann-Hilbert problems accordingly. In this section, we will give the solution of the first (Flaschka-Newell) Riemann-Hilbert problem for Painlev\'e-II equation.

First we give a brief review of the Riemann-Hilbert problem corresponding to the Flaschka-Newell Lax representation from \cite{Peter-Sigma-2017}. Firstly, the authors constructed two types of fundamental solution matrices $\mathbf{V}_{\infty}(\lambda; y)$ and $\mathbf{V}_{0}(\lambda; y)$ of the Flaschka-Newell Lax pair Eq.\eqref{eq:Lax-pair}. The former solution $\mathbf{V}_{\infty}(\lambda; y)$ has a convergent expansion when $|\lambda|$ is large and the latter solution is given in the neighbourhood of $\lambda=0$. Then there exists a so-called monodromy matrix $\mathbf{G}_{m}(\lambda; y)$ such that \begin{equation}
\mathbf{V}_{\infty}(\lambda; y)=\mathbf{V}_{0}(\lambda; y)\mathbf{G}_{m}, \quad 0<|\lambda|<\infty.
\end{equation}
One can show the following symmetry of $\mathbf{V}_{\infty}(\lambda; y)$ and $\mathbf{V}_{0}(\lambda; y)$
\begin{equation}
\sigma_1\mathbf{V}_{\infty}(-\lambda; y)\sigma_1=\mathbf{V}_{\infty}(\lambda; y), \quad \sigma_1\mathbf{V}_{0}(-\lambda; y)\sigma_1=\mathbf{V}_{0}(\lambda; y)\begin{bmatrix}0&(-1)^m\\
(-1)^{m+1}&0
\end{bmatrix}.
\end{equation}
The general formula of the connection matrix $\mathbf{G}_{m}$ is given by $\mathbf{G}_{m}=\begin{bmatrix}\alpha&(-1)^m\alpha\\
(-1)^{m+1}(2\alpha)^{-1}&(2\alpha)^{-1}
\end{bmatrix}$, $\alpha\neq 0$. Define two sectional matrices $\mathbf{M}_{m}(\lambda; y)$ depending on these two fundamental solutions $\mathbf{V}_{\infty}(\lambda; y)$ and $\mathbf{V}_{0}(\lambda; y)$, that is
\begin{equation}
\mathbf{M}_{m}(\lambda; y)=\left\{
\begin{split}\mathbf{V}_{\infty}(\lambda; y)\ee^{\ii\theta(\lambda; y)\sigma_3}\lambda^{-m\sigma_3},\quad |\lambda|>1,\\
\mathbf{V}_{0}(\lambda; y)\ee^{\ii\theta(\lambda; y)\sigma_3}\lambda^{-m\sigma_3},\quad |\lambda|<1,
\end{split}
\right.
\end{equation}
where $\theta(\lambda; y)=\lambda y+2\lambda^3$. This immediately gives a Riemann-Hilbert problem for $\mathbf{M}_{m}(\lambda; y)$.
\begin{rhp}[\cite{Peter-Sigma-2017}]\label{rhp:1}
Let $m\in\mathbb{Z}$ and $y\in \mathbb{C}$, seek $\mathbf{M}_{m}(\lambda; y)$ satisfying the following conditions.
\begin{itemize}
\item \textbf{Analyticity:} $\mathbf{M}_{m}(\lambda; y)$ is analytic for $|\lambda|\neq 1$, and $\mathbf{M}_{m,+}(\lambda; y)$, $\mathbf{M}_{m,-}(\lambda; y)$ are the continuous boundary values from the interior and the exterior.
\item \textbf{Jump condition:} When $|\lambda|=1$, $\mathbf{M}_{m}(\lambda; y)$ satisfies the following jump condition,
\begin{equation}
    \mathbf{M}_{m,+}(\lambda; y)=\mathbf{M}_{m,-}(\lambda;y)\lambda^{m\sigma_3}\ee^{-\ii\theta(\lambda; y)\sigma_3}\mathbf{G}^{-1}_{m}\ee^{\ii\theta(\lambda; y)\sigma_3}\lambda^{-m\sigma_3}, \quad |\lambda|=1.
\end{equation}
\item \textbf{Normalization:} When $\lambda\to\infty$, $\mathbf{M}_{m}(\lambda; y)$ is normalized with the following formula,
\begin{equation}
\lim\limits_{\lambda\to\infty}\mathbf{M}_{m}(\lambda; y)\lambda^{m\sigma_3}=\mathbb{I}.
\end{equation}
\end{itemize}
\end{rhp}
Then the rational solutions are given by
\begin{equation}
p^{[m]}(y)=2\ii\lim\limits_{\lambda\to\infty}\lambda^{1+m}M_{m,12}(\lambda; y)=-2\ii\lim\limits_{\lambda\to\infty}\lambda^{1-m}M_{m,21}(\lambda; y).
\end{equation}

To solve this Riemann-Hilbert problem, in section \ref{sec:DT-1} and \ref{sec:DT-2}, we constructed two types of generalized Darboux transformation and derived the general rational solutions of Painlev\'{e}-II equation. The solution of the Riemann-Hilbert problem \ref{rhp:1} can be directly given by the Darboux transformation.
\begin{theorem}\label{theo-3}
The solution of Riemann-Hilbert problem \ref{rhp:1} can be given by
\begin{equation}
\mathbf{M}_{m}(\lambda; y)=\left\{\begin{aligned}
\mathbf{M}_{m,-}(\lambda;y)=&(-1)^{\frac{m+1}{2}\sigma_3}\mathbf{T}^{[m]}(\lambda; y)(-1)^{\frac{m+1}{2}\sigma_3}\lambda^{-m\sigma_3}, \\
\mathbf{M}_{m,+}(\lambda;y)=&(-1)^{\frac{m+1}{2}\sigma_3}\mathbf{T}^{[m]}(\lambda; y)(-1)^{\frac{m+1}{2}\sigma_3}\ee^{-\ii\theta\sigma_3}\mathbf{G}_{m}^{-1}\ee^{\ii\theta\sigma_3}\lambda^{-m\sigma_3},
\end{aligned}\right.
\end{equation}
where $\mathbf{T}^{[m]}(\lambda; y)$ is the generalized Darboux transformation defined in \eqref{eq:darboux}.
\end{theorem}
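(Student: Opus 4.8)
The plan is to verify that the piecewise matrix $\mathbf{M}_m(\lambda;y)$ defined in the statement satisfies the three defining properties of Riemann--Hilbert Problem \ref{rhp:1}: the jump across $|\lambda|=1$, the normalization at $\lambda=\infty$, and sectional analyticity for $|\lambda|\neq1$ (the only delicate point being $\lambda=0$). Throughout write $P:=(-1)^{\frac{m+1}{2}\sigma_3}$, a constant diagonal matrix that commutes with $\sigma_3$, with $\ee^{\ii\theta\sigma_3}$, and with $\lambda^{-m\sigma_3}$. The jump condition is essentially free: inserting $\mathbf{M}_{m,-}=P\,\mathbf{T}^{[m]}P\,\lambda^{-m\sigma_3}$ and $\mathbf{M}_{m,+}=P\,\mathbf{T}^{[m]}P\,\ee^{-\ii\theta\sigma_3}\mathbf{G}_m^{-1}\ee^{\ii\theta\sigma_3}\lambda^{-m\sigma_3}$ into $\mathbf{M}_{m,-}^{-1}\mathbf{M}_{m,+}$, the common block $P\,\mathbf{T}^{[m]}P$ cancels against its inverse and leaves exactly $\lambda^{m\sigma_3}\ee^{-\ii\theta\sigma_3}\mathbf{G}_m^{-1}\ee^{\ii\theta\sigma_3}\lambda^{-m\sigma_3}$, which is the prescribed jump matrix. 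Thus the jump holds by construction, independently of the internal structure of $\mathbf{T}^{[m]}$.

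The remaining two conditions I would handle together through one identification, which is the heart of the argument: that $\Psi(\lambda;y):=P\,\mathbf{T}^{[m]}(\lambda;y)P\,\ee^{-\ii\theta(\lambda;y)\sigma_3}$ is precisely the Flaschka--Newell fundamental solution $\mathbf{V}_\infty(\lambda;y)$ of \cite{Peter-Sigma-2017}. By Theorem \ref{theo-1} the Darboux matrix $\mathbf{T}^{[m]}$ intertwines the seed Lax pair (with $p=0$, $m=0$, whose fundamental solution is $\ee^{-\ii\theta\sigma_3}$) with the Lax pair carrying the rational potential $p^{[m]}$ and the constant $m$; conjugating by the constant diagonal $P$ is the gauge that realizes the discrete symmetry $(p,m)\mapsto(-p,\pm m)$ and brings the transformed coefficient matrix into exactly the normalized Flaschka--Newell form, so that $\Psi$ solves the same $\partial_\lambda$- and $\partial_y$-systems as $\mathbf{V}_\infty$. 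Since $\mathbf{D}$ in \eqref{eq:darboux} consists of strictly negative powers of $\lambda$, we have $\mathbf{T}^{[m]}\to\mathbb{I}$ and hence $\Psi\to P^2\ee^{-\ii\theta\sigma_3}=(-1)^{m+1}\ee^{-\ii\theta\sigma_3}$ as $\lambda\to\infty$. A solution of the linear system with prescribed leading behavior at $\lambda=\infty$ being unique, this pins down $\Psi=\mathbf{V}_\infty$ once the sign coming from $P^2=(-1)^{m+1}\mathbb{I}$ is reconciled with the chosen normalization (the parity of $m$ must be tracked carefully here); the normalization condition $\lim_{\lambda\to\infty}\mathbf{M}_m\lambda^{m\sigma_3}=\mathbb{I}$ then follows because $\mathbf{M}_{m,-}\lambda^{m\sigma_3}=\Psi\,\ee^{\ii\theta\sigma_3}$.

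With $\Psi=\mathbf{V}_\infty$ in hand, analyticity becomes transparent. For $|\lambda|>1$ the branch $\mathbf{M}_{m,-}=\mathbf{V}_\infty\ee^{\ii\theta\sigma_3}\lambda^{-m\sigma_3}$ is a product of matrices analytic and invertible away from $\lambda=0$, so nothing is required. For $|\lambda|<1$, the connection relation $\mathbf{V}_\infty=\mathbf{V}_0\mathbf{G}_m$ gives $\mathbf{V}_\infty\mathbf{G}_m^{-1}=\mathbf{V}_0$, whence $\mathbf{M}_{m,+}=\Psi\,\ee^{-\ii\theta\sigma_3}\mathbf{G}_m^{-1}\ee^{\ii\theta\sigma_3}\lambda^{-m\sigma_3}=\mathbf{V}_0\,\ee^{\ii\theta\sigma_3}\lambda^{-m\sigma_3}$, which is exactly the interior piece of $\mathbf{M}_m$ in \cite{Peter-Sigma-2017}; its analyticity at $\lambda=0$ is then inherited from the construction of $\mathbf{V}_0$ there. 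Combining the three conditions and invoking the uniqueness of the solution of RHP \ref{rhp:1}, the constructed $\mathbf{M}_m$ is the solution.

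The main obstacle is precisely this $\lambda=0$ behavior, if one wants a self-contained verification at the level of the Darboux formula rather than borrowing the analyticity of $\mathbf{V}_0$: one must show directly that the order-$m$ pole of $\mathbf{T}^{[m]}$ at the origin (produced by the columns $\lambda^{-j}$ of $\mathbf{D}$) is exactly cancelled by $\lambda^{-m\sigma_3}$ after passing through $\mathbf{G}_m^{-1}$. The transformed $\partial_\lambda$-system has a regular singular point at $\lambda=0$ with residue $m\sigma_1$, so $\mathbf{V}_0$ carries the Frobenius exponents $\pm m$; the role of the explicit connection matrix $\mathbf{G}_m$ (its $(-1)^m$ pattern and free parameter $\alpha$) is to rotate this $\sigma_1$-diagonal local data into the $\sigma_3$-basis so that $\mathbf{V}_0\,\ee^{\ii\theta\sigma_3}\lambda^{-m\sigma_3}$ is holomorphic and invertible at the origin. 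Making this cancellation explicit is where the normalization of the Darboux vectors $\phi^{[l]}$ at $\lambda_1=0$, the symmetry relations of Lemma \ref{prop:sym}, the monodromy-free (hence logarithm-free) character of the rational potential $p^{[m]}$, and the parity of $m$ all have to be reconciled simultaneously; I expect this to be the most technical step of the argument.
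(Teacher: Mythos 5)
Your proposal is correct and follows essentially the same route as the paper: the jump holds by construction, the exterior piece is identified with $\mathbf{V}_\infty$ by observing that $\mathbf{T}^{[m]}(\lambda;y)\ee^{-\ii\theta\sigma_3}$ solves the transformed Lax pair and that sandwiching with $(-1)^{\frac{m+1}{2}\sigma_3}$ absorbs the sign discrepancy for even $m$ via the discrete symmetry $(p,m)\mapsto(-p,-m)$, and the interior piece together with its analyticity at $\lambda=0$ is inherited from the connection relation $\mathbf{V}_\infty=\mathbf{V}_0\mathbf{G}_m$ and the jump. If anything, your treatment of the pole cancellation at the origin and of the residual sign $P^2=(-1)^{m+1}\mathbb{I}$ in the normalization is more explicit than the paper's, which disposes of both points in a single sentence.
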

\begin{proof}
Based on the basic property of Darboux transformation, we know that $\mathbf{T}^{[m]}(\lambda; y)\ee^{-\ii\theta\sigma_3}$ can solve the Lax pair Eq.\eqref{eq:Lax-pair-1} with $p$ taken as the $m$th order rational solution $p^{[m]}$ and the constant as  $(-1)^{m-1}m$. From the symmetry in the $\partial y$-equation of the Lax pair, we have $\sigma_1\mathbf{B}^{FN}(-\lambda; y)\sigma_1=\mathbf{B}^{FN}(\lambda; y)$. Therefore, the Darboux transformation matrix also have symmetry
$$\sigma_1\mathbf{T}^{[m]}(-\lambda; y)\sigma_1=\mathbf{T}^{[m]}(\lambda; y).$$
In addition, the Darboux transformation matrix has a convergent series expansion for $\lambda\to\infty$. Notice that $\mathbf{T}^{[m]}$ almost coincides with the fundamental solution $\mathbf{V}_{\infty}(\lambda; y)$ defined in \cite{Peter-Sigma-2017}, except for a sign difference for even $m$. With the discrete symmetry of the Painlev\'{e}-II equation $\left(p(y), m\right)\rightarrow \left(-p(y), -m\right)$, we can take a simple transformation to $\mathbf{T}^{[m]}(\lambda; y)$, and now the formula is valid for all $m\in\mathbb{Z}$. Thus, we have derived the connection between the generalized Darboux transformation and the Riemann-Hilbert problem \ref{rhp:1} when $|\lambda|>1$. For $|\lambda|<1$, the solution of the Riemann Hilbert problem \ref{rhp:1} can also be given by the jump condition. Therefore, the solution of the Riemann-Hilbert problem \ref{rhp:1} is given in the whole complex plane $\mathbb{C}$, that is,
\begin{equation}
\begin{aligned}
\mathbf{M}_{m,-}(\lambda;y)=&(-1)^{\frac{m+1}{2}\sigma_3}\mathbf{T}^{[m]}(\lambda; y)(-1)^{\frac{m+1}{2}\sigma_3}\lambda^{-m\sigma_3}, \\
\mathbf{M}_{m,+}(\lambda;y)=&(-1)^{\frac{m+1}{2}\sigma_3}\mathbf{T}^{[m]}(\lambda; y)(-1)^{\frac{m+1}{2}\sigma_3}\ee^{-\ii\theta\sigma_3}\mathbf{G}_{m}^{-1}\ee^{\ii\theta\sigma_3}\lambda^{-m\sigma_3}.
\end{aligned}
\end{equation}
\end{proof}

\section{Discussions and Conclusions}
In this paper, we construct the general rational solutions to Painlev\'{e}-II equation by the generalized Darboux transformation. We are able to compactly write the rational solutions as Gram determinant. In the generalized Darboux transformation, the spectral parameter $\lambda_1$ is chosen as a special value $\lambda_1=0$. Under this condition, the fundamental solution $\phi_{1}$ becomes a polynomial. In particular, the spectrum in the Darboux transformation is a removable singularity, which is similar to the study of rogue wave. In section \ref{sec:DT-1}, we use two methods to derive the  B\"{a}cklund transformation of the rational solutions. One is directly from the Painlev\'{e}-II equation itself, and the other one is derived from the iteration steps in the Darboux transformation. The two transformations are shown to be equivalent by a simple transformation. In addition, from the exact form of rational solutions simply represented by Gram determinants, we can also analyze the asymptotics of large $y$. In section \ref{sec:RH}, we prove that our Darboux transformation can solve the Riemann-Hilbert problem in \cite{Peter-Sigma-2017}, which has never been reported before.

It is known that the generalized Darboux transformation can be used to derive the rogue wave of NLS equation and the rational solutions of Painlev\'{e}-II equation, both written as a Gram determinant. In view of the result in \cite{Yang-phyd-2021}, we conjecture that there exist certain connections between these two kinds of solutions. Our goal is to continue the study of this topic via the Riemann-Hilbert problem in the future.

\section*{Acknowledgements}

Liming Ling is supported by the National Natural Science Foundation of China (Grant No. 12122105), the Guangzhou Science and Technology Program of China (Grant No. 201904010362); Xiaoen Zhang is supported by the National Natural Science Foundation of China (Grant No.12101246), the China Postdoctoral Science Foundation (Grant No. 2020M682692), the Guangzhou Science and Technology Program of China(Grant No. 202102020783).

\end{document}